\documentclass{article}
\usepackage[utf8]{inputenc}
\usepackage{fullpage}
\usepackage{hyperref}
\usepackage{mathtools}
\DeclarePairedDelimiter\abs{\lvert}{\rvert}

\usepackage{amsfonts,amsmath}
\usepackage{soul}
\usepackage{amsthm}
\usepackage{mathrsfs}
\usepackage{authblk}
\usepackage{slashed}
\usepackage{dsfont}
\usepackage{cancel}

\DeclareMathOperator{\diag}{diag}
\DeclareMathOperator{\sgn}{sgn}
\DeclareMathOperator{\disc}{disc}

\newtheorem{thm}{Theorem}

\newtheorem{pro}{Proposition}
\newtheorem{lem}{Lemma}

\title{The Dirac equation across the horizons of the 5D Myers-Perry geometry : Separation of variables, radial asymptotic behaviour and Hamiltonian formalism}
\author{Qiu Shi Wang$^1$\footnote{Current affiliation (as of 04/2024) : Mathematical Institute, University of Oxford, Oxford, OX2 6GG, United Kingdom, wangqs@maths.ox.ac.uk}}
\affil{$^1$\small Department of Mathematics and Statistics\\ McGill University \\ Montréal, QC, H3A 2K6, Canada\\ qiu.s.wang@mail.mcgill.ca}
\date{3 April 2024}

\begin{document}

\maketitle
\begin{abstract}
We analytically extend the 5D Myers--Perry metric through the event and Cauchy horizons by defining Eddington--Finkelstein-type coordinates. Then, we use the orthonormal frame formalism to formulate and perform separation of variables on the massive Dirac equation, and analyse the asymptotic behaviour at the horizons and at infinity of the solutions to the radial ordinary differential equation (ODE) thus obtained. Using the essential self-adjointness result of Finster--Röken and Stone's formula, we obtain an integral spectral representation of the Dirac propagator for spinors with low masses and suitably bounded frequency spectra in terms of resolvents of the Dirac Hamiltonian, which can in turn be expressed in terms of Green's functions of the radial ODE.
\end{abstract}

\section{Introduction}

Detailed study of solutions of the massive Dirac equation in the Kerr geometry was made possible by Chandrasekhar's separation of variables procedure \cite{chandrasekhar}, in which the Dirac equation is separated into radial and angular systems of ordinary differential equations (ODEs). Finster, Kamran, Smoller and Yau use properties of the latter ODEs, notably the radial asymptotic behaviour of their solutions, to derive an integral spectral representation for the Dirac propagator \cite{Finster_2003} and study the long-term behaviour of Dirac particles \cite{Finster_2003, Finster_2002}. However, this analysis is performed on the Kerr(--Newman) metric in Boyer--Lindquist coordinates, which are singular at the event and Cauchy horizons. The results obtained in this coordinate system are thus valid only in the region outside the event horizon.

To study the Dirac equation in all regions of the Kerr geometry, Röken \cite{R_ken_2017} uses Eddington--Finkelstein-type coordinates, an analytic extension of the usual Boyer--Lindquist coordinates which is regular through the horizons, up until the ring singularity. However, an additional technical difficulty arises in the construction of an integral spectral representation for the Dirac propagator. Since the Dirac Hamiltonian is not elliptic at the horizons, standard results pertaining to elliptic operators cannot be applied to conclude its self-adjointness. To remedy this issue, Finster and Röken \cite{Finster_2016} construct a self-adjoint extension of the Dirac Hamiltonian on a class of Lorentzian spin manifolds, in particular the Kerr geometry in Eddington--Finkelstein-type coordinates with an additional inner radial boundary inside the Cauchy horizon. This allows them in \cite{Finster_2018} to use Stone's formula \cite{reed_simon_1978} to construct an integral spectral representation for the Dirac propagator in terms of the Green's matrix of the radial ODE obtained by \cite{R_ken_2017} using the Newman--Penrose formalism. Their construction involves spectral projectors onto the eigenspaces of the angular Dirac operator. However, Stone's formula is expressed in terms of slightly complex frequencies, in which case it is unclear whether such spectral projectors exist, as the angular Dirac operator is no longer self-adjoint.

Derived by Myers and Perry \cite{MYERS1986304}, the Myers--Perry metrics generalize the Kerr metric to higher dimensions. In particular, the 5-dimensional Myers--Perry geometry describes a black hole rotating in two independent directions. In contrast to Röken \cite{R_ken_2017} using a null frame in the Kerr geometry, Wu \cite{Wu_2008} uses the orthonormal frame formalism to formulate the Dirac equation in the 5D Myers--Perry geometry in Boyer--Lindquist coordinates, then separates it into radial and angular ODEs. It is possible \cite{BLpaper} to use the latter equations to derive an integral spectral representation for the Dirac propagator in the exterior region of this geometry using the methods of \cite{Finster_2003}.

A problem of interest would therefore be to study the separability and radial asymptotics of the Dirac equation through the horizons of the 5D Myers--Perry geometry using a suitable analogue of Eddington--Finkelstein-type coordinates in 5 dimensions and the orthonormal frame formalism. As the self-adjointness result of \cite{Finster_2016} is not specific to 4 dimensions, it is natural to ask whether it may be applied in these coordinates to construct an integral spectral representation for the Dirac propagator, analogously to \cite{Finster_2018}. To address these questions is the objective of the present work. In the case of sufficiently small (spinor) masses and of frequencies satisfying a specific mass-dependent bound, it is possible to resolve the aforementioned difficulty in constructing spectral projectors for the angular Dirac operator by using the method of slightly non-self-adjoint perturbations of \cite{Finster_2006}. A limitation of this approach is that we may only obtain an integral spectral representation for the Dirac propagator applied to initial data with small masses and with frequency spectra contained in the above-mentioned mass-dependent bounded interval. In other words, to obtain an integral spectral representation for the solution to the Cauchy problem with arbitrary initial data, one would need to show the existence of angular spectral projectors for all masses and frequencies, which are not available via this method.

The paper is organised as follows. In Section \ref{5DMP}, we introduce the 5D Myers--Perry geometry in Boyer--Lindquist coordinates, and in Section \ref{EF} we derive Eddington--Finkelstein-type coordinates which are regular across the horizons. In Section \ref{Cartan}, we use a regular orthonormal frame to explicitly formulate the Dirac equation. The construction of the regular frame from the Kinnersley-type Newman--Penrose vectors of \cite{Wu_2008}, as well as details of the computation of the Dirac operator, can be found in Appendix \ref{frameappendix}. We then use an ansatz similar to Chandrasekhar's ansatz \cite{chandrasekhar} to separate the Dirac equation into radial and angular ODEs in Section \ref{sepvar}. The asymptotic behaviour of solutions to the radial ODE is described by Propositions \ref{eventasympt}, \ref{Cauchyasympt} and \ref{infinityasympt} in Section \ref{asymptotics}. In Section \ref{self-adj}, we establish the essential self-adjointness of the Dirac Hamiltonian with a suitable domain of definition using the results of \cite{Finster_2016}, and in Section \ref{angularprojectors} we construct spectral projectors for the angular Dirac operator with slightly complex frequencies for small masses and suitably bounded frequencies using the method of slightly non-self-adjoint perturbations \cite{Finster_2006}. Finally, we construct in Section \ref{intrep} an integral spectral representation for the Dirac propagator for initial data subject to the above restrictions on the mass and frequency spectrum in terms of Green's functions of the radial ODE using a variant of Stone's formula.

\section{The 5D Myers--Perry black hole in Boyer--Lindquist coordinates}\label{5DMP}

In this section, we introduce the 5D Myers--Perry black hole in Boyer--Lindquist coordinates, well-defined outside of the event horizon.

In Boyer--Lindquist coordinates $(t,\rho,\vartheta,\varphi,\psi)$, a 5D Myers--Perry black hole is represented by the manifold
\begin{equation*}
\mathcal{M}=\mathbb{R}_t\times (0,\infty)_\rho \times (0,\frac{\pi}{2})_\vartheta \times [0,2\pi)_\varphi \times [0,2\pi)_\psi
\end{equation*}
equipped with the Lorentzian metric
\begin{equation}\label{metric}
g=-dt^2+\frac{\Sigma \rho^2}{\Delta}d\rho^2+\Sigma d\vartheta^2 + (r^2+a^2)\sin^2\vartheta d\varphi^2 + (r^2+b^2)\cos^2\vartheta d\psi^2 + \frac{\mu}{\Sigma}(dt-a\sin^2\vartheta d\varphi - b\cos^2\vartheta d\psi)^2,
\end{equation}
where $\mu/2$ is the mass of the black hole, $a$ and $b$ its two independent angular momenta,
\begin{equation*}
\Delta=(\rho^2+a^2)(\rho^2+b^2)-\mu \rho^2,\qquad \Sigma=\rho^2+a^2\cos^2\vartheta+b^2\sin^2\vartheta.
\end{equation*}
The Lorentzian manifold $(\mathcal{M},g)$ solves the vacuum Einstein equations in 5 dimensions, i.e. it is Ricci flat. Furthermore, it has three commuting Killing vector fields $\partial_t$, $\partial_\varphi$ and $\partial_\psi$. We restrict our attention to the non-extreme case $\mu>a^2+b^2+2\abs{ab}$, for which $\Delta(\rho)$ has two distinct positive roots
\begin{equation*}
\rho^2_\pm = \frac{1}{2}\big(\mu -a^2-b^2\pm \sqrt{(\mu-a^2-b^2)^2-4a^2b^2}\big).
\end{equation*}
The radii $\rho_-,\rho_+$ are those of the Cauchy and event horizons respectively. 

\section{Eddington--Finkelstein-type coordinates}\label{EF}

In this section, we derive Eddington--Finkelstein-type coordinates for the 5D Myers--Perry black hole.  We also derive the transformation laws between Boyer--Lindquist and Eddington--Finkelstein-type coordinates for the coordinate 1-forms and vector fields, for use in subsequent sections.

The 5D Myers--Perry metric in Boyer--Lindquist coordinates (\ref{metric}) is singular at the event and Cauchy horizons, as $g_{\rho\rho}=\Sigma \rho^2/\Delta\rightarrow \infty$ as $\Delta\rightarrow 0$. Since the black hole has coordinate singularities rather than curvature singularities at the horizons, it is possible to construct a system of coordinates that analytically extends the 5D Myers--Perry metric through them. This results in the metric (\ref{EFmetric}), which is fully regular up until the essential singularity at $r=0$.

The 5D Myers--Perry geometry has a pair of real principal null geodesic vector fields \cite{Daud__2012} which, in Boyer--Lindquist coordinates, are of the form
\begin{equation*}
V^\pm=\frac{(\rho^2+a^2)(\rho^2+b^2)}{\Delta}\left(\partial_t+\frac{a}{\rho^2+a^2}\partial_\varphi+\frac{b}{\rho^2+b^2}\partial_\psi\right)\pm \partial_\rho.
\end{equation*}
The tangent vectors to the principal null geodesics satisfy, in terms of an affine parameter $\lambda$,
\begin{equation}\label{tangentvector}
\frac{dt}{d\lambda} = \frac{(\rho^2+a^2)(\rho^2+b^2)}{\Delta}C,\quad \frac{d\rho}{d\lambda}=\pm C,\quad \frac{d\theta}{d\lambda}=0, \quad \frac{d\varphi}{d\lambda} = \frac{a(\rho^2+b^2)}{\Delta}C,\quad \frac{d\psi}{d\lambda} = \frac{b(\rho^2+a^2)}{\Delta}C.
\end{equation}
We will use the ``Regge--Wheeler'' radial coordinate $x$ defined by
\begin{equation}\label{RWcoord}
\frac{dx}{d\rho}=\frac{(\rho^2+a^2)(\rho^2+b^2)}{\Delta}.
\end{equation}
Following the form of (\ref{tangentvector}), we then define the Eddington--Finkelstein-type coordinates $(\tau,r,\theta,\phi,\xi)$ on the 5D Myers--Perry black hole by the coordinate transformation
\begin{equation}\label{coordchange}
\begin{split}
\tau&=t+x-\rho\\
r&=\rho\\
\theta&=\vartheta\\
\phi&=\varphi+\int \frac{a(\rho^2+b^2)}{\Delta}\; d\rho\\
\xi&=\psi + \int \frac{b(\rho^2+a^2)}{\Delta}\; d\rho.
\end{split}
\end{equation}
In terms of the coordinate 1-forms, the above change of variables reads
\begin{equation*}
\begin{split}
d\tau&=dt+\frac{\mu\rho^2}{\Delta}d\rho\\
dr&=d\rho\\
d\theta&=d\vartheta\\
d\phi&=d\varphi+\frac{a(\rho^2+b^2)}{\Delta}d\rho\\
d\xi&=d\psi+\frac{b(\rho^2+a^2)}{\Delta}d\rho.
\end{split}
\end{equation*}
The metric (\ref{metric}) can thus be written as
\begin{multline}\label{EFmetric}
g=\left( -1+\frac{\mu}{\Sigma}\right) d\tau^2 + \frac{2\mu}{\Sigma}d\tau(dr-a\sin^2\theta d\phi -b\cos^2\theta d\xi) + \left(1+\frac{\mu}{\Sigma}\right) (dr-a\sin^2\theta d\phi - b\cos^2\theta d\xi)^2\\
+\Sigma d\theta^2+(r^2+a^2)\sin^2\theta d\phi^2 + (r^2+b^2)\cos^2\theta d\xi^2 - (a\sin^2\theta d\phi + b\cos^2\theta d\xi)^2.
\end{multline}
The above metric, regular across the horizons, is defined on the Lorentzian manifold $(\mathcal{N},g)$, where
\begin{equation*}
\mathcal{N} =\mathbb{R}_\tau\times (0,\infty)_r \times (0,\frac{\pi}{2})_\theta \times [0,2\pi)_\phi \times [0,2\pi)_\xi.
\end{equation*}
As the induced metric on constant-$\tau$ hypersurfaces is positive definite, $\tau$ is a proper time function. Furthermore, we note that the coordinate vector fields transform as, keeping the same notation for $\Delta(\rho)=\Delta(r)$ and $\Sigma(\rho,\vartheta)=\Sigma(r,\theta)$,
\begin{equation}\label{changevectorfields}
\begin{split}
\partial_t&=\partial_\tau\\
\partial_\rho&=\partial_r+\frac{\mu r^2}{\Delta}\partial_\tau + \frac{a(r^2+b^2)}{\Delta}\partial_\phi + \frac{b(r^2+a^2)}{\Delta}\partial_\xi\\
\partial_\vartheta&=\partial_\theta\\
\partial_\varphi&=\partial_\phi\\
\partial_\psi&=\partial_\xi.
\end{split}
\end{equation}

\section{Orthonormal frame formalism for the Dirac equation}\label{Cartan}

In the absence of a suitable Newman--Penrose formalism in 5 dimensions, we resort to using the equivalent but more computationally tedious orthonormal frame formalism for the Dirac equation. In this section, we first introduce the Dirac equation and the choice of gamma matrices used. Then, we use the regular orthonormal frame (\ref{framevectors}),(\ref{frame1forms}) constructed in Appendix \ref{frameappendix} to obtain an explicit expression for the Dirac operator, noting that its angular part is identical in Eddington--Finkelstein-type and Boyer--Lindquist coordinates \cite{Wu_2008,Daud__2012}.

We choose the gamma matrices $\gamma^A$, $A=0,1,2,3,5$, as
\begin{equation*}
\gamma^0=i\begin{pmatrix}
0 & I \\
I & 0
\end{pmatrix},\enspace
\gamma^1=i\begin{pmatrix}
0 & \sigma^3 \\
-\sigma^3 & 0
\end{pmatrix},\enspace
\gamma^2=i\begin{pmatrix}
0 & \sigma^1\\
-\sigma^1 & 0
\end{pmatrix},\enspace
\gamma^3=i\begin{pmatrix}
0 & \sigma^2\\
-\sigma^2 & 0
\end{pmatrix},\enspace
\gamma^5=\begin{pmatrix}
I & 0\\
0 & -I
\end{pmatrix},
\end{equation*}
where the $\sigma^j$ are the Pauli matrices
\begin{equation}\label{paulimatrices}
    \sigma_1 = \begin{pmatrix}
        0 & 1 \\
        1 & 0
    \end{pmatrix},\quad \sigma_2 = \begin{pmatrix}
        0 & -i\\
        i & 0
    \end{pmatrix}, \quad \sigma_3 = \begin{pmatrix}
        1 & 0\\
        0 & -1
    \end{pmatrix}.
\end{equation}
They satisfy the Clifford algebra anticommutation relations
\begin{equation}\label{gammaanticom}
\{\gamma^A,\gamma^B\}=2\eta^{AB},
\end{equation}
where $\eta^{AB}=\diag\{-1,1,1,1,1\}$ is the five-dimensional Minkowski metric in our chosen signature. We define the matrices $\Gamma^A$ by $\Gamma^0=i\gamma^0$ and $\Gamma^j=-\gamma^0\gamma^j$ for $j\neq 0$; they satisfy the anticommutation relations
\begin{equation*}
\{\Gamma^A,\Gamma^B\}=2\delta^{AB},
\end{equation*}
where $\delta^{AB}=\diag\{1,1,1,1,1\}$ is the Kronecker delta.

The massive Dirac equation takes the form
\begin{equation}\label{dirac}
(\gamma^A(\partial_A+\Gamma_A)-m)\psi =0,
\end{equation}
where $\Gamma_A$ are the components of the spinor connection  $\Gamma=\Gamma_Ae^A=\frac{1}{4}\gamma^A\gamma^B\omega_{AB}$ in an orthonormal pentad frame $e^A={e^A}_\mu dx^\mu$, and $\omega_{AB}$ is the connection 1-form in the same frame. Cartan's first structure equation then relates the components of ${\omega^A}_B=\eta^{AC}\omega_{CB}$ and the orthonormal frame $e^A$ via
\begin{equation}\label{cartanfirst}
de^A = -{\omega^A}_B \wedge e^B.
\end{equation}

In the 5D Myers--Perry geometry extended through the horizons, one can find the regular orthonormal frame (\ref{framevectors}),(\ref{frame1forms}). In this frame, we can compute ${\omega^A}_B$, from which we obtain the coefficients $\Gamma_A$ of the spinor connection $\Gamma_Ae^A=\frac{1}{2}\sum_{A<B}\gamma^A\gamma^B\omega_{AB}$. Details of the construction of the frame and explicit formulae for the connection coefficients can be found in Appendix \ref{frameappendix}. Using the anticommutation relations (\ref{gammaanticom}), the relation $\gamma^5 = -i\gamma^0\gamma^1\gamma^2\gamma^3$ and the form of the orthonormal frame (\ref{framevectors}), the Dirac operator can therefore be written, after various simplifications, as
\begin{multline*}
\gamma^A(\partial_A+\Gamma_A) = \gamma^0 \frac{1}{r_+^3\sqrt{\Sigma}}\Biggl(\left(\frac{\Delta}{2}+2Mr^2+\frac{r_+^6}{2r^2}\right)\partial_\tau + \left(\frac{\Delta}{2}-\frac{r_+^6}{2r^2}\right)\partial_r + a(r^2+b^2)\partial_\phi + b(r^2+a^2)\partial_\xi \\
+ \frac{\partial_r\Delta}{4}+\frac{\Delta}{4r} + \frac{r_+^6}{4r^3} + \left(\frac{\Delta}{2}-\frac{r_+^6}{2r^2}\right)\frac{r-ip\gamma^5}{2\Sigma}\Biggr)\\
+\gamma^1 \frac{1}{r_+^3\sqrt{\Sigma}}\Biggl(\left(\frac{\Delta}{2}+2Mr^2-\frac{r_+^6}{2r^2}\right)\partial_\tau + \left(\frac{\Delta}{2}+\frac{r_+^6}{2r^2}\right)\partial_r + a(r^2+b^2)\partial_\phi + b(r^2+a^2)\partial_\xi \\
+ \frac{\partial_r\Delta}{4}+\frac{\Delta}{4r} - \frac{r_+^6}{4r^3} + \left(\frac{\Delta}{2}+\frac{r_+^6}{2r^2}\right)\frac{r-ip\gamma^5}{2\Sigma}\Biggr)\\
+\gamma^2\frac{1}{\sqrt{\Sigma}}\left(\partial_\theta + \frac{\cot\theta}{2} - \frac{\tan\theta}{2} - \frac{(a^2-b^2)\sin\theta\cos\theta}{2p\Sigma}i\gamma^5 (r-ip\gamma^5)\right) \\
+ \gamma^3 \frac{\sin\theta\cos\theta}{p\sqrt{\Sigma}}\left((a^2-b^2)\partial_\tau + \frac{a}{\sin^2\theta}\partial_\phi - \frac{b}{\cos^2\theta}\partial_\xi\right) \\
+ \gamma^5\frac{1}{rp}(ab\partial_\tau + b\partial_\phi + a\partial_\xi) + \gamma^0\gamma^1\frac{iab}{2r^2p^2}(r+ip\gamma^5),
\end{multline*}
where $p=\sqrt{a^2\cos^2\theta + b^2\sin^2\theta}$.

\section{Separation of variables}\label{sepvar}
In this section, we follow \cite{Daud__2012} and perform an invertible, time-independent transformation of spinors $\psi \mapsto \psi'=\mathscr{P}\psi$. We then put the Dirac equation into Hamiltonian form and separate it into radial and angular ODEs. The angular ODE thus obtained is the same as in Boyer--Lindquist coordinates \cite{BLpaper}.

Let $\mathscr{P}$ be a square root of $r+ip\gamma^5$, for instance 
\begin{equation}\label{Ptransform}
\mathscr{P}=\sqrt{\frac{r+\sqrt{\Sigma}}{2}} + i\sqrt{\frac{\sqrt{\Sigma}-r}{2}}\gamma^5.
\end{equation}
In terms of $\psi'=\mathscr{P}\psi$, the Dirac equation (\ref{dirac}) is then 
\begin{multline}\label{transformeddirac}
\Biggl[ \left( \frac{1}{r_+^3} \left( \gamma^0\left(\frac{\Delta}{2}+2Mr^2+\frac{r_+^6}{2r^2}\right) + \gamma^1\left(\frac{\Delta}{2}+2Mr^2-\frac{r_+^6}{2r^2}\right)\right)+\gamma^3\frac{\sin\theta\cos\theta}{p}(a^2-b^2) + \left(\frac{\gamma^5}{p} - \frac{i}{r}\right)ab\right)\partial_\tau\\
+\frac{1}{r_+^3}\left(\gamma^0\left(\frac{\Delta}{2}-\frac{r_+^6}{2r^2}\right)+\gamma^1\left(\frac{\Delta}{2}+\frac{r_+^6}{2r^2}\right)\right)\partial_r + \gamma^2\left(\partial_\theta + \frac{\cot\theta}{2}-\frac{\tan\theta}{2}\right) + \gamma^3\frac{1}{p}(a\cot\theta\partial_\phi - b\tan\theta\partial_\xi)\\
+ \gamma^5\frac{1}{p}(b\partial_\phi + a\partial_\xi) + \left((\gamma^0+\gamma^1)\frac{a(r^2+b^2)}{r_+^3}-\frac{ib}{r}\right)\partial_\phi + \left((\gamma^0+\gamma^1)\frac{b(r^2+a^2)}{r_+^3}-\frac{ia}{r}\right)\partial_\xi + \gamma^0\gamma^1\frac{iab}{r^2}\\
+\gamma^0\frac{1}{r_+^3}\left(\frac{\partial_r\Delta}{4}+\frac{\Delta}{4r} + \frac{r_+^6}{4r^3}\right) + \gamma^1\frac{1}{r_+^3}\left(\frac{\partial_r\Delta}{4}+\frac{\Delta}{4r} - \frac{r_+^6}{4r^3}\right) - mr + imp\gamma^5\Biggr] \psi' = 0.
\end{multline}
Denote the prefactor of $\partial_\tau$ by
\begin{equation*}
P\equiv\gamma^0\left(\frac{\Delta}{2r_+^3} + \frac{2Mr^2}{r_+^3} + \frac{r_+^3}{2r^2}\right) + \gamma^1\left(\frac{\Delta}{2r_+^3} + \frac{2Mr^2}{r_+^3} - \frac{r_+^3}{2r^2}\right) + \gamma^3 \frac{\sin\theta\cos\theta}{p}(a^2-b^2) + \left(\frac{\gamma^5}{p} - \frac{i}{r}\right)ab.
\end{equation*}
It is invertible, with inverse
\begin{multline*}
P^{-1}=\frac{-1}{\Sigma+2M}\biggl(\gamma^0\left(\frac{\Delta}{2r_+^3} + \frac{2Mr^2}{r_+^3} + \frac{r_+^3}{2r^2}\right) + \gamma^1\left(\frac{\Delta}{2r_+^3} + \frac{2Mr^2}{r_+^3} - \frac{r_+^3}{2r^2}\right) \\
+ \gamma^3 \frac{\sin\theta\cos\theta}{p}(a^2-b^2) + \left(\frac{\gamma^5}{p} + \frac{i}{r}\right)ab\biggr).
\end{multline*}
Left-multiplying (\ref{transformeddirac}) by $-iP^{-1}$ and defining $D_j = -i\partial_j$, one obtains
\begin{multline*}
i\partial_\tau \psi' = P^{-1} \Biggl[\frac{1}{r_+^3}\left(\gamma^0\left(\frac{\Delta}{2}-\frac{r_+^6}{2r^2}\right)+\gamma^1\left(\frac{\Delta}{2}+\frac{r_+^6}{2r^2}\right)\right)D_r + \gamma^0 \mathbb{D}_{S^3} + \\
\left((\gamma^0+\gamma^1)\frac{a(r^2+b^2)}{r_+^3}-\frac{ib}{r}\right)D_\phi + \left((\gamma^0+\gamma^1)\frac{b(r^2+a^2)}{r_+^3}-\frac{ia}{r}\right)D_\xi \\+
 \gamma^0\gamma^1 \frac{ab}{r^2} - \gamma^0 \frac{i}{r_+^3}\left(\frac{\partial_r\Delta}{4}+\frac{\Delta}{4r} + \frac{r_+^6}{4r^3}\right) - \gamma^1\frac{i}{r_+^3}\left(\frac{\partial_r\Delta}{4}+\frac{\Delta}{4r} - \frac{r_+^6}{4r^3}\right)\Biggr]\psi',
\end{multline*}
where 
\begin{equation}\label{DS3}
\mathbb{D}_{S^3}=i\gamma^0 \gamma^2 \Big( \partial_\theta + \frac{\cot\theta}{2}-\frac{\tan\theta}{2}\Big) + i\gamma^0\gamma^3 \frac{1}{\sin\theta}\partial_\phi+i\gamma^0\gamma^5\frac{1}{\cos\theta}\partial_\xi
\end{equation}
is the Dirac operator on $S^3$ \cite{Daud__2012}. For convenience, we make the substitution $P^{-1}=-N\gamma^0/r$, and thus obtain the Hamiltonian form of the Dirac equation
\begin{equation}\label{hamform}
i\partial_\tau \psi' = N\mathcal{D}_0\psi',
\end{equation}
where
\begin{multline*}
\mathcal{D}_0=\left(\frac{1}{rr_+^3}\left(\frac{\Delta}{2}-\frac{r_+^6}{2r^2}\right) + \Gamma^1\frac{1}{rr_+^3}\left(\frac{\Delta}{2}+\frac{r_+^6}{2r^2}\right)\right)D_r + \frac{1}{r}\mathbb{D}_{S^3} \\
+\left((\mathds{1}_4+\Gamma^1)\frac{a(r^2+b^2)}{rr_+^3}+\gamma^0\frac{ib}{r^2}\right)D_\phi
+\left((\mathds{1}_4+\Gamma^1)\frac{b(r^2+a^2)}{rr_+^3}+\gamma^0\frac{ia}{r^2}\right)D_\xi + \gamma^1 \frac{ab}{r^3} \\
- \frac{i}{r_+^3}\left(\frac{\partial_r\Delta}{4r}+\frac{\Delta}{4r^2} + \frac{r_+^6}{4r^4}\right) -\Gamma^1 \frac{i}{r_+^3}\left(\frac{\partial_r\Delta}{4r}+\frac{\Delta}{4r^2} - \frac{r_+^6}{4r^4}\right) - \gamma^0 im + \Gamma^5 \frac{mp}{r}.
\end{multline*}

Using the ansatz \cite{BLpaper}
\begin{equation}\label{ansatz}
\psi'(\tau,r,\theta,\phi,\xi)=e^{-i\omega \tau}e^{-i((k_a+\frac{1}{2})\phi+(k_b+\frac{1}{2})\xi)}\begin{pmatrix}
X_+(r) Y_+(\theta)\\
X_-(r) Y_-(\theta)\\
X_-(r) Y_+(\theta)\\
X_+(r) Y_-(\theta)
\end{pmatrix},
\end{equation}
where $\omega\in\mathbb{R}$ is the frequency of the Dirac particle and $k_a,k_b\in\mathbb{Z}$ its ``azimuthal quantum numbers'' along the two independent axes of rotation $\partial_\phi$, $\partial_\xi$ respectively, one can write the Dirac equation as
\begin{equation}\label{diracsep}
(\mathcal{D}_0-\omega N^{-1})\psi'=0,
\end{equation}
where
\begin{multline}\label{Ninverse}
N^{-1}=-\frac{\gamma^0P}{r}\\
=\left(\frac{\Delta}{2rr_+^3}+\frac{2Mr}{r_+^3}+\frac{r_+^3}{2r^3}\right) + \Gamma^1\left(\frac{\Delta}{2rr_+^3}+\frac{2Mr}{r_+^3}-\frac{r_+^3}{2r^3}\right) + \Gamma^3 \frac{\sin\theta\cos\theta(a^2-b^2)}{rp} + \Gamma^5\frac{ab}{rp} + \gamma^0 \frac{iab}{r^2}.
\end{multline}
The Dirac equation (\ref{diracsep}) can be separated into purely angular and radial parts
\begin{equation*}
\frac{1}{r}(\mathcal{R}+\mathcal{A})\psi'=0,
\end{equation*}
where
\begin{equation}\label{angularoperator}
\mathcal{A}=\mathbb{D}_{S^3}+mp\Gamma^5-\omega\Big(\frac{(a^2-b^2)\sin\theta\cos\theta}{p}\Gamma^3+\frac{ab}{p}\Gamma^5\Big)
\end{equation}
as in Boyer--Lindquist coordinates \cite{Daud__2012}, and 
\begin{multline*}
\mathcal{R}=\frac{1}{r_+^3}\left(\left(\frac{\Delta}{2}-\frac{r_+^6}{2r^2}\right) + \Gamma^1\left( \frac{\Delta}{2} + \frac{r_+^6}{2r^2}\right)\right)D_r - mr\Gamma^0 + \left( (I+\Gamma^1) \frac{a(r^2+b^2}{r_+^3} + \Gamma^0 \frac{b}{r}\right)D_\phi \\
+ \left( (I+\Gamma^1)\frac{b(r^2+a^2)}{r_+^3} + \Gamma^0 \frac{a}{r}\right)D_\xi + \frac{ab}{r^2}\gamma^1 - \frac{i}{r_+^3}\left(\frac{\partial_r\Delta}{4}+\frac{\Delta}{4r} + \frac{r_+^6}{4r^3}\right) - \Gamma^1 \frac{i}{r_+^3}\left(\frac{\partial_r\Delta}{4}+\frac{\Delta}{4r} - \frac{r_+^6}{4r^3}\right)\\
-\omega\left(\left(\frac{\Delta}{2r_+^3} + \frac{2Mr^2}{r_+^3} + \frac{r_+^3}{2r^2}\right) + \Gamma^1\left(\frac{\Delta}{2r_+^3} + \frac{2Mr^2}{r_+^3} - \frac{r_+^3}{2r^2}\right) + \frac{ab}{r}\Gamma^0\right).
\end{multline*}
As in \cite{BLpaper}, the ansatz (\ref{ansatz}) allows the replacements $D_\phi \rightarrow -(k_a+1/2)$ and $D_\xi \rightarrow -(k_b+1/2)$, under which the Dirac equation separates into the ODEs 
\begin{equation*}
-\mathcal{R}\psi' = \mathcal{A}\psi' = \gamma^1 \lambda\psi'
\end{equation*}
for a separation constant $\lambda$. Explicitly, the radial ODE is
\begin{multline}\label{radialraw}
\left( \frac{r_+^3}{r^2}\partial_r - \frac{r_+^3}{2r^3} + \frac{i\omega r_+^3}{r^2}\right) X_+ = \left( -\lambda +\frac{i}{r}(mr^2+a(k_a+\frac{1}{2})+b(k_b+\frac{1}{2}) + \omega ab) - \frac{ab}{r^2}\right)X_-,\\
\left(\frac{\Delta}{r_+^3}\partial_r - \frac{2i}{r_+^3}((k_a+\frac{1}{2})a(r^2+b^2) + (k_b+\frac{1}{2})b(r^2+a^2)) + \frac{\partial_r \Delta}{2r_+^3} + \frac{\Delta}{2rr_+^3}- \frac{i\omega}{r_+^3}(\Delta+4Mr^2)\right)X_-\\
=\left( -\lambda -\frac{i}{r}(mr^2+a(k_a+\frac{1}{2})+b(k_b+\frac{1}{2}) + \omega ab) - \frac{ab}{r^2}\right)X_+.
\end{multline}

We remark that (\ref{radialraw}) resembles the analogous result for the Kerr geometry in Eddington--Finkelstein-type coordinates \cite{Finster_2018}, namely
\begin{equation*}
\begin{split}
\frac{1}{r_+}\left( \Delta_K \partial_r + \frac{\partial_r \Delta_K}{2} - i\omega(\Delta_K + 4Mr) - 2iak\right) X_+ &=(\lambda + imr)X_-\\
r_+(\partial_r + i\omega) X_- &= (\lambda - imr)X_+,
\end{split}
\end{equation*}
where $\Delta_K=(r^2+a^2)-2Mr$ is the horizon-defining function for the Kerr geometry, vanishing at the event and Cauchy horizons.

Making the transformation $\tilde{X}_-=r\sqrt{\abs{\Delta}}X_-$ and $\tilde{X}_+ = r_+^3 X_+$, (\ref{radialraw}) takes the more symmetric form
\begin{multline}\label{radial}
\left( \Delta \partial_r - \frac{\Delta}{2r} + \Delta i\omega\right) \tilde{X}_+ = \sgn(\Delta)\sqrt{\abs{\Delta}} \left( -\lambda r +i(mr^2+a(k_a+\frac{1}{2})+b(k_b+\frac{1}{2}) + \omega ab) - \frac{ab}{r}\right)\tilde{X}_-,\\
\left(\Delta\partial_r - \frac{\Delta}{2r} -2i((k_a+\frac{1}{2})a(r^2+b^2) + (k_b+\frac{1}{2})b(r^2+a^2)) - i\omega(\Delta + 4Mr^2)\right)\tilde{X}_-\\
=\sqrt{\abs{\Delta}}\left( -\lambda r -i(mr^2+a(k_a+\frac{1}{2})+b(k_b+\frac{1}{2}) + \omega ab) - \frac{ab}{r}\right)\tilde{X}_+.
\end{multline}
The angular ODE is, exactly as in Boyer--Lindquist coordinates \cite{BLpaper},
\begin{equation}\label{angular}
AY\equiv
\begin{pmatrix}
C_b(\theta) & L_\theta+C_a(\theta)\\
-L_\theta + C_a(\theta) & -C_b(\theta)
\end{pmatrix}
\begin{pmatrix}Y_+\\
Y_-
\end{pmatrix} = \lambda 
\begin{pmatrix}Y_+\\
Y_-
\end{pmatrix},
\end{equation}
where
\begin{align}\label{angularabbrevs}
\begin{split}
L_\theta&=\partial_\theta + \frac{\cot\theta}{2}-\frac{\tan\theta}{2}\\
C_a(\theta)&=-\frac{(k_a+\frac{1}{2})}{\sin\theta}-\frac{\omega(a^2-b^2)\sin\theta\cos\theta}{p}\\
C_b(\theta)&=-\frac{(k_b+\frac{1}{2})}{\cos\theta}+mp-\frac{\omega ab}{p}.
\end{split}
\end{align}
As in the Kerr--Newman geometry \cite{Finster_2003}, the angular eigenvalues satisfy certain nondegeneracy and regularity properties. The following was shown in \cite[Appendix A]{BLpaper}.

\begin{pro}\label{angularnondegenerate}
For any $\omega\in\mathbb{R}$ and $k_a,k_b\in \mathbb{Z}$, the angular operator $A$ (\ref{angular}) has discrete, real, nondegenerate spectrum $\sigma(A^{k_ak_b\omega}) = \{\lambda_l\}$ and eigenvectors $\left(Y_{+,l}^{k_ak_b},Y_{-,l}^{k_ak_b}\right)$, where $l\in\mathbb{Z}$, both smoothly dependent on the frequency $\omega$. As functions of $\omega$, the eigenvalues can thus be ordered $\lambda_l(\omega)<\lambda_{l+1}(\omega)$.
\end{pro}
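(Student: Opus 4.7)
The plan is to establish the claim via standard spectral theory for first-order Dirac-type operators on a compact interval with singular endpoints, regarded as acting on $\mathbb{C}^2$-valued functions. First, I would set up the Hilbert space $\mathcal{H}=L^2\bigl((0,\pi/2),\sin\theta\cos\theta\,d\theta\bigr)\otimes\mathbb{C}^2$, whose measure is inherited from the invariant volume form on $S^3$ in Hopf-type coordinates after separating the Fourier modes in $\phi$ and $\xi$. A direct integration by parts shows that $L_\theta=\partial_\theta+\tfrac{\cot\theta}{2}-\tfrac{\tan\theta}{2}$ is formally skew-adjoint with respect to this measure, from which it follows that the matrix operator $A=A^{k_ak_b\omega}$, equipped with initial domain $C_c^\infty\bigl((0,\pi/2)\bigr)\otimes\mathbb{C}^2$, is symmetric on $\mathcal{H}$.

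Next I would establish essential self-adjointness and discreteness of the spectrum. The cleanest route is to recognise that, up to conjugation, $A$ in the case $\omega=m=0$ is precisely a component of the intrinsic Dirac operator $\mathbb{D}_{S^3}$ of (\ref{DS3}) restricted to the isotypic subspace labelled by the half-integer azimuthal quantum numbers $k_a+\tfrac12$ and $k_b+\tfrac12$. Since $S^3$ is compact, $\mathbb{D}_{S^3}$ is essentially self-adjoint with purely discrete real spectrum accumulating only at $\pm\infty$. The remaining terms $mp\,\Gamma^5-\omega\bigl(\tfrac{(a^2-b^2)\sin\theta\cos\theta}{p}\Gamma^3+\tfrac{ab}{p}\Gamma^5\bigr)$ are bounded symmetric multiplication operators (provided $ab\neq 0$, in which case $p^2=a^2\cos^2\theta+b^2\sin^2\theta$ is bounded below; the degenerate cases would require a separate Frobenius analysis at the endpoint where $p$ vanishes). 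The Kato--Rellich theorem then yields that $A$ is self-adjoint on the same domain and has compact resolvent, hence discrete real spectrum $\{\lambda_l\}$.

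Nondegeneracy follows from the first-order structure of the eigenvalue equation $AY=\lambda Y$: its solution space on the regular interval $(0,\pi/2)$ is two-dimensional, while at each singular endpoint a Frobenius analysis shows that the half-integer coefficients in $C_a$ and $C_b$ force exactly a one-dimensional subspace of solutions to lie in $\mathcal{H}$ near that endpoint. The intersection of these two one-dimensional subspaces is at most one-dimensional, so each eigenvalue is simple. Since $A^{k_ak_b\omega}$ depends linearly and self-adjointly on the real parameter $\omega$ on a fixed core, Kato's theory of self-adjoint analytic families of type (A) applies, giving real-analytic (in particular smooth) dependence of each simple eigenvalue $\lambda_l(\omega)$ and of a suitably normalised eigenvector $\bigl(Y_{+,l}^{k_ak_b},Y_{-,l}^{k_ak_b}\bigr)$ on $\omega$. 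As the spectrum is discrete, simple, and unbounded both above and below, the eigenvalues can be strictly ordered and labelled by $l\in\mathbb{Z}$.

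The main obstacle is the endpoint analysis at $\theta=0$ and $\theta=\pi/2$: one must verify that the limit-point condition holds at both endpoints and that the Frobenius indicial exponents arising from the $\tfrac{k_a+1/2}{\sin\theta}$ and $\tfrac{k_b+1/2}{\cos\theta}$ singularities in $C_a,C_b$ select the correct one-dimensional square-integrable subspace, which physically corresponds to regularity of the spinor on $S^3$ at the coordinate singularities of the Hopf chart. Once this endpoint classification is secured, the self-adjointness, compactness of the resolvent, simplicity of eigenvalues, and analyticity in $\omega$ follow by routine application of the tools cited above.
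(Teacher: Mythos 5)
The first thing to say is that the paper itself contains no proof of this proposition: it is quoted from Appendix A of \cite{BLpaper}, with the observation that the angular operator (\ref{angularoperator}) is unchanged from Boyer--Lindquist coordinates. So there is no in-paper argument to compare yours against; what can be said is that your outline is the standard route in this literature (it parallels the treatment of the Kerr--Newman angular equation in \cite{Finster_2003}): symmetrize $L_\theta$ against the weight $\sin\theta\cos\theta$ (equivalently conjugate by $\sqrt{\sin\theta\cos\theta}$, which sends $L_\theta\mapsto\partial_\theta$ and the weighted measure to $d\theta$), regard $A^{k_ak_b\omega}$ as a bounded symmetric perturbation of the $(k_a,k_b)$-block of $\mathbb{D}_{S^3}$ from (\ref{DS3}) to get self-adjointness and compact resolvent, obtain simplicity from the two-dimensionality of the solution space of the first-order system combined with the endpoint classification, and get smoothness in $\omega$ from Kato's type (A) theory applied to the affine family $\omega\mapsto A_0+\omega B$ with $B$ bounded. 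Each of these steps is correct in principle, and your handling of the degenerate case $ab=0$ and of the global ordering (simplicity at every $\omega$ plus continuity of branches prevents crossings) is appropriate.

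The one substantive gap is that the step on which everything else hinges --- the endpoint analysis at $\theta=0$ and $\theta=\pi/2$ --- is announced but not carried out. Essential self-adjointness on $C_c^\infty((0,\pi/2))\otimes\mathbb{C}^2$ without boundary conditions, the identification of $A$ at $\omega=m=0$ with a block of the closed operator $\mathbb{D}_{S^3}$, and the claim that exactly one solution of (\ref{angular}) is square-integrable near each endpoint are all equivalent to the limit-point property there, and that property must be verified, not assumed. Concretely, near $\theta=0$ the system decouples at leading order into $Y_-\sim\theta^{k_a}$ and $Y_+\sim\theta^{-(k_a+1)}$, so that with respect to $\sin\theta\cos\theta\,d\theta\sim\theta\,d\theta$ exactly one branch is square-integrable for every $k_a\in\mathbb{Z}$ --- the borderline exponent is avoided precisely because $k_a+\frac12$ is never zero --- and similarly at $\theta=\pi/2$ with $k_b$. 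Until this computation is written down (including the integer values of $k_a,k_b$ closest to the borderline), the self-adjointness, discreteness, and simplicity claims are all conditional; once it is, the rest of your argument goes through as stated.
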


\section{Asymptotic behaviour of solutions of the radial ODE}\label{asymptotics}
Following \cite{Finster_2003}, we study in this section the asymptotic behaviour of solutions of the radial ODE (\ref{radial}) near the event and Cauchy horizons and at radial infinity.

In terms of the radial coordinate $x$ defined in (\ref{RWcoord}) and $\tilde{X}=(\tilde{X}_+,\tilde{X}_-)$, the radial ODE may be rewritten
\begin{equation*}
\partial_x \tilde{X}=\frac{1}{(r^2+a^2)(r^2+b^2)}\begin{pmatrix} \frac{\Delta}{2r}-\Delta i\omega & -r^{-1}\sgn(\Delta)\sqrt{\abs{\Delta}}\; S(r) \\
-r^{-1}\sqrt{\abs{\Delta}}\; \overline{S}(r) & \frac{\Delta}{2r}+2i U(r)+i\omega(\Delta + 4Mr^2)\end{pmatrix}\tilde{X},
\end{equation*}
where
\begin{align}
U(r)&=\left(k_a+\frac{1}{2}\right)a(r^2+b^2)+\left(k_b+\frac{1}{2}\right)b(r^2+a^2)\label{Ur}\\
S(r)&=\lambda + i\left(mr+\frac{a}{r}\left(k_a+\frac{1}{2}\right)+\frac{b}{r}\left(k_b+\frac{1}{2}\right) + \frac{\omega ab}{r}\right) + \frac{ab}{r^2}.\label{Sr}
\end{align}
The asymptotic behaviour of $\tilde{X}$ at the horizons is similar to the case of the Kerr geometry in Eddington--Finkelstein-type coordinates \cite{R_ken_2017}, but different from the case of the 5D Myers--Perry geometry in Boyer--Lindquist coordinates \cite{BLpaper}. Namely, we have

\begin{pro}\label{eventasympt}
Every nontrivial solution $\tilde{X}$ of (\ref{radial}) is asymptotically as $r\searrow r_+$ of the form

\begin{equation*}
\tilde{X}(x)=\begin{pmatrix}
g_{+,1}\\
g_{+,2}e^{2i\Omega_+ x}
\end{pmatrix} + R_+(x)
\end{equation*}
with
\begin{align}
\abs{g_{+,1}}^2 + \abs{g_{+,2}}^2 &\neq 0\nonumber \\ 
\Omega_+ &= \omega + \frac{a(k_a+\frac{1}{2})}{r_+^2+a^2} + \frac{b(k_b+\frac{1}{2})}{r_+^2+b^2}\label{omegaplus}\\
\abs{R_+}&\leq c_+ e^{d_+ x} \nonumber
\end{align}
for suitable constants $c_+,d_+>0$.
\end{pro}

\begin{proof}
We make the ansatz
\begin{equation*}
\tilde{X}(x)=\begin{pmatrix}
g_1(x)\\
g_2(x)e^{2i\Omega_+ x}
\end{pmatrix}.
\end{equation*}
Then, by direct computation, $g(x)=(g_1(x),g_2(x))$ satisfies the differential equation
\begin{equation*}
\partial_x \begin{pmatrix}
g_1(x)\\
g_2(x)
\end{pmatrix}=\frac{1}{(r^2+a^2)(r^2+b^2)}\begin{pmatrix}
\Delta(\frac{1}{2r}-i\omega) & -\frac{1}{r} \sgn(\Delta) \sqrt{\abs{\Delta}}S(r)e^{2i\Omega_+ x}\\
-\frac{1}{r}\sqrt{\abs{\Delta}}\;\overline{S}(r)e^{-2i\Omega_+ x} & T(r)
\end{pmatrix},
\end{equation*}
where, using the form of $\Omega_+$ (\ref{omegaplus}),
\begin{equation*}
T(r)=\frac{\Delta}{2r}+2iU(r)+i\omega(\Delta+4Mr^2) - 2i\Omega_+(r^2+a^2)(r^2+b^2)=\frac{\Delta}{2r}-i\omega\Delta.
\end{equation*}
As $r\searrow r_+$ and $x\to\infty$, $\Delta$ decays to zero exponentially, and thus $g_1(x)$ and $g_2(x)$ tend to constants. The exponential decay of the error term $R_+$ then follows from the same proof as in \cite{Finster_2003}.
\end{proof}

The asymptotic behaviour near the Cauchy horizon $r_-$ is similar to the above, as $\Delta\to 0$ exponentially as $r\to r_\pm$. More precisely, we have

\begin{pro}\label{Cauchyasympt}
Every nontrivial solution $\tilde{X}$ of (\ref{radial}) is asymptotically as $r\searrow r_-$ of the form
\begin{equation}
\tilde{X}(x)=\begin{pmatrix}
g_{-,1}\\
g_{-,2}e^{2i\Omega_- x}
\end{pmatrix} + R_-(x)
\end{equation}
with
\begin{align*}
\abs{g_{-,1}}^2 + \abs{g_{-,2}}^2&\neq 0\\
\Omega_- &= \omega + \frac{a(k_a+\frac{1}{2})}{r_-^2+a^2} + \frac{b(k_b+\frac{1}{2})}{r_-^2+b^2}\\
\abs{R_-}&\leq c_- e^{d_- x}
\end{align*}
for suitable constants $c_-,d_->0$.
\end{pro}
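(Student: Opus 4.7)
The plan is to mirror the proof of Proposition \ref{eventasympt} essentially verbatim, replacing $r_+$ with $r_-$ throughout. The symmetry between the two horizons is structural: in the non-extreme case $\Delta$ has simple zeros at both $r=r_\pm$, and $\Delta\to 0$ exponentially in $x$ at each horizon through the Regge--Wheeler coordinate defined by $dx/dr=(r^2+a^2)(r^2+b^2)/\Delta$, which has a logarithmic singularity at each zero of $\Delta$.

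First I would make the ansatz $\tilde{X}(x)=\bigl(g_1(x),\,g_2(x)e^{2i\Omega_- x}\bigr)^T$ and derive by direct substitution the system
\begin{equation*}
\partial_x \begin{pmatrix} g_1 \\ g_2 \end{pmatrix} = \frac{1}{(r^2+a^2)(r^2+b^2)} \begin{pmatrix} \Delta\bigl(\tfrac{1}{2r} - i\omega\bigr) & -\tfrac{\sgn(\Delta)\sqrt{\abs{\Delta}}}{r} S(r)\,e^{2i\Omega_- x} \\[2pt] -\tfrac{\sqrt{\abs{\Delta}}}{r}\overline{S}(r)\,e^{-2i\Omega_- x} & T_-(r) \end{pmatrix} \begin{pmatrix} g_1 \\ g_2 \end{pmatrix},
\end{equation*}
where $T_-(r)=\Delta/(2r)+2iU(r)+i\omega(\Delta+4Mr^2)-2i\Omega_-(r^2+a^2)(r^2+b^2)$. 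The key algebraic step is to verify that the non-$\Delta$ part of $T_-(r)$ vanishes at $r=r_-$, so that $T_-(r)=O(\Delta)$ in a neighbourhood of $r_-$. This uses only $\Delta(r_-)=0$, which yields $(r_-^2+a^2)(r_-^2+b^2)=\mu r_-^2=2Mr_-^2$; substituting the explicit form of $\Omega_-$ one finds
\begin{equation*}
2\Omega_-(r_-^2+a^2)(r_-^2+b^2)=4\omega Mr_-^2 + 2U(r_-),
\end{equation*}
cancelling the corresponding terms in $T_-(r)$ at $r=r_-$ exactly as for $\Omega_+$ at $r=r_+$.

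Once this is established, the off-diagonal entries carry a factor $\sqrt{\abs{\Delta}}$, and both diagonal entries are $O(\Delta)$ near $r_-$. Since $\abs{\Delta}$ decays exponentially in $x$ as $r\searrow r_-$, the full coefficient matrix is exponentially small in $x$, so the right-hand side of the system for $(g_1,g_2)$ is integrable in $x$. The same Grönwall/integral-equation argument invoked in \cite{Finster_2003} and in the proof of Proposition \ref{eventasympt} then gives the existence of limits $g_{-,1},g_{-,2}$ and the exponential bound $\abs{R_-}\leq c_- e^{d_- x}$ for suitable $c_-,d_->0$. The nondegeneracy condition $\abs{g_{-,1}}^2+\abs{g_{-,2}}^2\neq 0$ for nontrivial $\tilde{X}$ follows from the fact that the map from initial data to asymptotic data is a linear isomorphism.

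The only step genuinely beyond Proposition \ref{eventasympt} is the algebraic identity for $\Omega_-$ above, which is just the $r_+\to r_-$ analogue of what was used there, so I expect no essential obstacle; the work is bookkeeping.
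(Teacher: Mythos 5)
Your proposal is correct and is exactly the argument the paper intends: the paper gives no separate proof of Proposition \ref{Cauchyasympt}, remarking only that it is ``similar to the above,'' and your step-by-step mirroring of the Proposition \ref{eventasympt} proof with $r_+ \to r_-$, including the verification that $\Delta(r_-)=0$ yields $(r_-^2+a^2)(r_-^2+b^2)=2Mr_-^2$ and hence $T_-(r_-)=0$, is precisely that argument. As a minor note, your formulation that $T_-(r)=O(\Delta)$ near $r_-$ (rather than an exact identity) is all that is actually needed and is cleaner than the corresponding claim in the paper's Proposition \ref{eventasympt} proof, which asserts $T(r)=\Delta/(2r)-i\omega\Delta$ identically even though the $U$-dependent terms cancel only at $r=r_+$ and are merely $O(\Delta)$ nearby.
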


To study the asymptotics at infinity, we note that all terms in (\ref{radial}) are analytic in $r$ away from the horizons. As such, we may expand (\ref{radial}) in powers of $1/r$. We obtain
\begin{equation}\label{xtildaODE}
\partial_r \tilde{X} = \left[ \begin{pmatrix}
-i\omega & im \\
-im & i\omega \end{pmatrix}
+ \frac{1}{r} \begin{pmatrix}
\frac{1}{2} & -\lambda \\
-\lambda & \frac{1}{2}
\end{pmatrix} + \mathcal{O}(r^{-2})\right]\tilde{X}.
\end{equation}
The eigenvalues of the matrix potential between the square brackets are not purely imaginary, and the method used by \cite{Finster_2003} cannot be directly applied to (\ref{xtildaODE}). However, this issue may be circumvented by the substitution $\overline{X}_\pm = r^{-1/2}\tilde{X}_\pm$. In terms of the original functions $X_\pm$ from the separation ansatz (\ref{ansatz}), we have 
\begin{equation*}
\overline{X}_- = \sqrt{r\abs{\Delta}}X_- \qquad \mathrm{and} \qquad  \overline{X}_+=\frac{r_+^3}{\sqrt{r}}X_+.
\end{equation*} 
The asymptotic form of the radial ODE at infinity then becomes
\begin{equation}\label{overlineasymptotics}
\partial_r \overline{X} = \left[ \begin{pmatrix}
-i\omega & im \\
-im & i\omega \end{pmatrix}
+ \frac{1}{r} \begin{pmatrix}
0 & -\lambda \\
-\lambda & 0
\end{pmatrix} + \mathcal{O}(r^{-2})\right]\overline{X}.
\end{equation}
Noting that (\ref{overlineasymptotics}) is precisely the same as in \cite{BLpaper}, which is a strict simplification from \cite{Finster_2003}, we therefore have
\begin{pro}\label{infinityasympt}
Every nontrivial solution $\overline{X}$ of (\ref{radial}) has for large $r$ the asymptotic form
\begin{equation}\label{asymptoticsinfty}
\overline{X}(r)=A\begin{pmatrix}
e^{-i\Phi(r)}f_{\infty,1}\\
e^{i\Phi(r)}f_{\infty,2}
\end{pmatrix}
+R_\infty(r),
\end{equation}
with, for some constant $C>0$,
\begin{align*}
\abs{f_{\infty,1}}^2+\abs{f_{\infty,2}}^2&\neq 0\\
\Phi(r)&=\mathrm{sgn}(\omega)\sqrt{\omega^2-m^2}\:r\\
A&=\begin{pmatrix}
\cosh\Theta & \sinh\Theta\\
\sinh\Theta & \cosh\Theta
\end{pmatrix}, \qquad \Theta=\frac{1}{4}\log\left(\frac{\omega+m}{\omega-m}\right)\\
\abs{R_\infty}&\leq \frac{C}{r}.
\end{align*}
\end{pro}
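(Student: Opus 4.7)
The plan is to diagonalize the constant leading term in (\ref{overlineasymptotics}) and then reduce the problem to a Volterra-type integral equation, following the template of \cite{Finster_2003}.

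First, under the implicit assumption $\omega^2>m^2$, the constant matrix $M=\begin{pmatrix}-i\omega & im \\ -im & i\omega\end{pmatrix}$ has purely imaginary eigenvalues $\pm i\sqrt{\omega^2-m^2}$. A short hyperbolic calculation, using $\cosh(2\Theta)=\abs{\omega}/\sqrt{\omega^2-m^2}$ and $\sinh(2\Theta)=m/\sqrt{\omega^2-m^2}$, shows that the real matrix $A$ from the statement (with $\det A=1$) diagonalizes $M$, with the first column of $A$ being the eigenvector for eigenvalue $-i\sgn(\omega)\sqrt{\omega^2-m^2}=-i\Phi'(r)$; the cases $\omega>m$ and $\omega<-m$ must be treated separately to confirm the sign. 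Since $A$ is a linear combination of the identity and the off-diagonal swap matrix, it commutes with the $1/r$-correction $\begin{pmatrix}0 & -\lambda \\ -\lambda & 0\end{pmatrix}$, which is therefore unchanged under conjugation by $A$.

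Substituting $\overline{X}=AY$ in (\ref{overlineasymptotics}) thus gives
\begin{equation*}
\partial_r Y=\left[\begin{pmatrix}-i\Phi'(r) & 0 \\ 0 & i\Phi'(r)\end{pmatrix}+\frac{1}{r}\begin{pmatrix}0 & -\lambda \\ -\lambda & 0\end{pmatrix}+\mathcal{O}(r^{-2})\right]Y,
\end{equation*}
and writing $Y_1=e^{-i\Phi(r)}Z_1$, $Y_2=e^{i\Phi(r)}Z_2$ removes the leading oscillation to yield
\begin{equation*}
\partial_r Z_1=-\frac{\lambda}{r}e^{2i\Phi(r)}Z_2+\mathcal{O}(r^{-2})Z,\qquad \partial_r Z_2=-\frac{\lambda}{r}e^{-2i\Phi(r)}Z_1+\mathcal{O}(r^{-2})Z.
\end{equation*}

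The core task reduces to showing that every solution of this system tends to a limit $Z_\infty=(f_{\infty,1},f_{\infty,2})$ as $r\to\infty$ with $\abs{Z(r)-Z_\infty}=\mathcal{O}(1/r)$. I would recast the system as the Volterra equation $Z(r)=Z_\infty+\int_r^\infty K(s)Z(s)\,ds$; the $\mathcal{O}(s^{-2})$ contributions to $K$ integrate directly to $\mathcal{O}(1/r)$, while the oscillatory entries $\frac{1}{s}e^{\pm 2i\Phi(s)}$ are treated by a single integration by parts against the constant phase derivative $\Phi'$, so that the boundary term at $r$ is of order $1/r$ and the remaining integrand becomes $\mathcal{O}(s^{-2})$, again integrating to $\mathcal{O}(1/r)$. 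A contraction-mapping argument on a weighted $L^\infty$ space then establishes existence and uniqueness of $Z$ for any prescribed $Z_\infty$, along with the stated remainder bound, exactly as in \cite[Section~4.1]{Finster_2003}; nontriviality of $\tilde{X}$ forces $Z_\infty\neq 0$ by uniqueness, giving $\abs{f_{\infty,1}}^2+\abs{f_{\infty,2}}^2\neq 0$. Undoing $\overline{X}=AY$ and the leading-order rotation recovers (\ref{asymptoticsinfty}).

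The main obstacle is modest: once $A$ and $\Phi$ are correctly identified through the hyperbolic identities above, the rest is a bootstrap and integration-by-parts argument identical in structure to the one in \cite{Finster_2003}, the only subtlety being the careful tracking of $\sgn(\omega)$ through the diagonalization to ensure the stated phase $\Phi(r)=\sgn(\omega)\sqrt{\omega^2-m^2}\,r$ matches the eigenvalue associated to the first column of $A$.
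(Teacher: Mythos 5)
Your proposal is correct and follows essentially the same route as the paper: the paper performs the rescaling $\overline{X}_\pm = r^{-1/2}\tilde{X}_\pm$ to reach (\ref{overlineasymptotics}) and then simply cites \cite{BLpaper} and \cite{Finster_2003}, whereas you explicitly carry out the argument being cited (diagonalization by $A$ via $\cosh(2\Theta)=\abs{\omega}/\sqrt{\omega^2-m^2}$, $\sinh(2\Theta)=\sgn(\omega)\,m/\sqrt{\omega^2-m^2}$, phase extraction, and the Volterra/integration-by-parts bootstrap). Your hyperbolic identities and sign-tracking of $\sgn(\omega)$ check out, so the write-up is a faithful, more detailed version of the paper's proof.
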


Thus, up to some suitable reparametrization $\overline{X}$, the asymptotic behaviour at infinity of Dirac spinors is identical in Eddington--Finkelstein-type coordinates and Boyer--Lindquist coordinates. This is to be expected, as the two coordinate systems tend towards each other at radial infinity.

\section{Essential self-adjointness of the Dirac Hamiltonian}\label{self-adj}
Across the event and Cauchy horizons of the 5D Myers--Perry black hole, the Dirac Hamiltonian loses ellipticity. Standard results for elliptic operators can therefore not be applied to conclude its essential self-adjointness.
In \cite{Finster_2016}, Finster and Röken construct an essentially self-adjoint extension of the Dirac Hamiltonian for a class of non-uniformly elliptic boundary value problems which include the Kerr geometry \cite{Finster_2018} and the 5D Myers--Perry geometry in Eddington--Finkelstein-type coordinates.
In this section, we apply the main result of \cite{Finster_2016} to obtain essential self-adjointness for the Dirac Hamiltonian with suitable boundary conditions.

Given a system of coordinates $x^\mu=(t,x^j)$ on a spin manifold $\mathcal{M}$ with metric tensor $g=g_{\mu\nu}\, dx^\mu\otimes dx^\nu$, a set of gamma matrices $\gamma^A$ satisfying (\ref{gammaanticom}) and an orthonormal frame $\partial_A={e_A}^\mu \partial_\mu$, one can form Dirac matrices $\tilde{\gamma}^\mu = \gamma^A {e_A}^\mu$. The orthonormal frame condition $\eta^{AB}{e_A}^\mu {e_B}^\nu = g^{\mu\nu}$ implies that they satisfy the anticommutation relations
\begin{equation}\label{metricanticomm}
\{\tilde{\gamma}^\mu,\tilde{\gamma}^\nu\}=2g^{\mu\nu}.
\end{equation}
We can therefore write the Dirac equation (\ref{dirac}) equivalently as
\begin{equation*}
\left(i\tilde{\gamma}^\mu\nabla_\mu - m\right)\psi = 0,
\end{equation*}
where $\nabla$ is the spinor connection. It can be rewritten in the Hamiltonian form
\begin{equation}\label{ham}
i\partial_t \psi = H\psi,
\end{equation}
with
\begin{equation*}
H=-(\tilde{\gamma}^t)^{-1}(i\gamma^j\nabla_j - m).
\end{equation*}
In particular, the principal symbol of $H$ as a partial differential operator is
\begin{equation*}
P(\boldsymbol{\zeta})=\alpha^j\zeta_j\qquad \mathrm{with}\qquad \alpha^j = -i (\tilde{\gamma}^t)^{-1}\tilde{\gamma}^j.
\end{equation*}
Under an invertible, time-independent transformation of spinors $\psi'=\mathscr{P}\psi$, the transformed Hamiltonian form of the Dirac equation becomes $i\partial_t\psi'=H'\psi'$, where the principal symbol of $H'$ is now 
\begin{equation*}
P'(\boldsymbol{\zeta})=\alpha'^j\zeta_j\qquad \mathrm{with}\qquad \alpha'^j= -i\mathscr{P}(\tilde{\gamma}^t)^{-1}\tilde{\gamma}^j\mathscr{P}^{-1}.
\end{equation*}
Its determinant is given by
\begin{equation*}
\det P'(\boldsymbol{\zeta})=\frac{\det(\tilde{\gamma}^j\zeta_j)}{\det(\tilde{\gamma}^t)}.
\end{equation*}
Using the relations $(\tilde{\gamma}^\tau)^2=g^{\tau\tau}$ and $\tilde{\gamma}^j\zeta_j\tilde{\gamma}^k\zeta_k=g^{jk}\zeta_j\zeta_k$ arising from the anticommutation relations (\ref{metricanticomm}), one obtains
\begin{equation*}
\det P'(\boldsymbol{\zeta})=\sqrt{\frac{g^{jk}\zeta_j\zeta_k}{g^{\tau\tau}}}.
\end{equation*}
Thus, the determinant of the principal symbol vanishes and the Dirac Hamiltonian loses ellipticity precisely where the spatial part of the background metric is no longer positive or negative definite.

Now, we consider the 5D Myers--Perry geometry in Eddington--Finkelstein-type coordinates $x^\mu=(\tau,x^j)$, where $x^j=(r,\theta,\phi,\xi)$, with the orthonormal frame (\ref{framevectors}) and the invertible spinor transformation (\ref{Ptransform}). We note that the transformed Dirac Hamiltonian $H'$ is exactly $H'=N\mathcal{D}_0$ of (\ref{hamform}). The components of the spatial part of the inverse of the metric (\ref{EFmetric}) are
\begin{equation*}
g^{jk}=\frac{1}{\Sigma r^2}\begin{pmatrix}
\Delta & 0 & a(r^2+b^2) & b(r^2+a^2) \\
0 & r^2 & 0 & 0\\
a(r^2+b^2) & 0 & r^2\csc^2\theta + b^2 & ab\\
b(r^2+a^2) & 0 & ab & r^2\sec^2\theta + a^2
\end{pmatrix}.
\end{equation*}
At the horizons, where $\Delta=0$, we may take $\boldsymbol{\zeta}=(1,0,0,0)$, which yields $g^{jk}\zeta_j\zeta_k=0$. Thus $H'$ is not elliptic there, and we have to apply the results of \cite{Finster_2016}.

Let $\mathcal{M}=\mathbb{R}_\tau \times [r_0,\infty)_r \times (0,\frac{\pi}{2})_\theta \times (0,2\pi)_\phi \times (0,2\pi)_\xi$ with $r_0<r_-$. Equipped with the metric (\ref{EFmetric}), it is a submanifold with boundary of the 5D Myers--Perry geometry. It has an inner boundary
\begin{equation*}
\partial \mathcal{M}=\{\tau,r=r_0,\theta,\phi,\xi\}
\end{equation*}
and a family of spacelike hypersurfaces
\begin{equation*}
\Sigma_\tau = \{\tau=const.,r,\theta,\phi,\xi\}
\end{equation*}
with compact boundaries
\begin{equation*}
\partial \Sigma_\tau = \{\tau=const., r=r_0, \theta,\phi,\xi\}\simeq S^3.
\end{equation*}

The construction of \cite{Finster_2016} is crucially dependent on the existence of a Killing vector field $K$ which is timelike and tangential at the boundary $\partial\mathcal{M}$, as the choice of time coordinate $t$ used to formulate the Cauchy problem for the Dirac equation is precisely along $K=\partial_t$. However, as in the Kerr geometry, the Killing vector field $\partial_\tau$ is not everywhere timelike on $\partial \mathcal{M}$, as can be seen as follows. The condition for $\partial_\tau$ to be timelike on $\partial\mathcal{M}$, in our chosen metric signature, is
\begin{equation*}
g(\partial_\tau,\partial_\tau)=g_{\tau\tau}=-1+\frac{\mu}{\Sigma}<0,
\end{equation*}
or equivalently
\begin{equation*}
\mu-r_0^2 < a^2\cos^2\theta + b^2\sin^2\theta
\end{equation*}
for all $\theta\in(0,\pi/2)$. Supposing this were satisfied and using the fact that 
\begin{equation*}
r_0^2<r_-^2=\frac{1}{2}(\mu-a^2-b^2)-\sqrt{(\mu-a^2-b^2)^2-4a^2b^2},
\end{equation*}
we then must have
\begin{equation*}
\sqrt{(\mu-a^2-b^2)^2-4a^2b^2}<a^2\left(\cos^2\theta-\frac{1}{2}\right) + b^2\left( \sin^2\theta - \frac{1}{2}\right) - \frac{\mu}{2}
\end{equation*}
for all $\theta\in(0,\pi/2)$. This is clearly false, as the right-hand side simplifies to $-\mu/2<0$ when $\theta=\pi/4$. Therefore the vector field $\partial_\tau$ is not timelike everywhere on $\partial\mathcal{M}$.

Analogously to what was done in \cite{Finster_2018}, it is possible to construct a Killing vector field timelike and tangential at $r=r_0$ in the following way, recalling that $\partial_\tau, \partial_\phi$ and $\partial_\xi$ are commuting Killing vector fields on $\mathcal{M}$.

\begin{lem}
There exist $\alpha,\beta \in \mathbb{R}$ such that $K=\partial_\tau + \alpha \partial_\phi + \beta \partial_\xi$ is tangential and timelike at $\partial \mathcal{M}$.
\end{lem}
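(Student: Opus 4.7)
Tangentiality is automatic and imposes no condition on $\alpha,\beta$: each of $\partial_\tau,\partial_\phi,\partial_\xi$ has vanishing $dr$-component in Eddington--Finkelstein-type coordinates, so any linear combination is tangent to the hypersurface $\{r=r_0\}$. For timelikeness, I propose the explicit values
\begin{equation*}
\alpha = \frac{a}{r_0^2+a^2}, \qquad \beta = \frac{b}{r_0^2+b^2},
\end{equation*}
motivated by the angular velocities appearing in the principal null geodesic field $V^\pm$ at radius $r_0$; the corresponding Killing combination $\partial_\tau+\Omega^{(a)}_\pm\partial_\phi+\Omega^{(b)}_\pm\partial_\xi$ becomes null at $r=r_\pm$ (this is the borderline case encoded in Propositions \ref{eventasympt} and \ref{Cauchyasympt}), and the same recipe with $r_\pm$ replaced by $r_0<r_-$ will produce a strictly timelike vector.

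Using the Eddington--Finkelstein metric (\ref{EFmetric}), a direct computation of $g(K,K)$ at $r=r_0$ gives
\begin{equation*}
g(K,K) = -1 + \frac{\mu}{\Sigma}\bigl(1 - \alpha a \sin^2\theta - \beta b \cos^2\theta\bigr)^2 + \alpha^2(r_0^2+a^2)\sin^2\theta + \beta^2(r_0^2+b^2)\cos^2\theta.
\end{equation*}
Substituting the proposed values, the factor in parentheses simplifies, via the identity $\sin^2\theta+\cos^2\theta=1$ (which cancels the $a^2b^2$ cross terms in the numerator), to $r_0^2\Sigma/[(r_0^2+a^2)(r_0^2+b^2)]$. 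Combining the remaining like terms, the whole expression factors cleanly to
\begin{equation*}
g(K,K)\big|_{r=r_0} = -\frac{r_0^2\,\Sigma\,\Delta(r_0)}{\bigl[(r_0^2+a^2)(r_0^2+b^2)\bigr]^2}.
\end{equation*}

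Since $r_0\in(0,r_-)$ lies strictly inside the Cauchy horizon, $\Delta(r_0)>0$ (recall $\Delta$ is positive on $(0,r_-)\cup(r_+,\infty)$ and negative between the two horizons), while $\Sigma,r_0^2>0$ always, so $g(K,K)<0$ uniformly in $\theta\in(0,\pi/2)$. The only non-routine step is identifying the correct $(\alpha,\beta)$; once they are in hand, the rest is a slightly tedious but algorithmic simplification. The sign of the final expression is controlled entirely by $\Delta(r_0)$, which also clarifies why the inner boundary must be placed strictly inside the Cauchy horizon: between the two horizons $\Delta<0$ and no Killing vector of this form can be timelike there.
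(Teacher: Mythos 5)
Your proof is correct, and it takes a genuinely different route from the paper's. The paper argues non-constructively: it writes $p(\alpha,\beta)=g(K,K)$, notes that $g_{\xi\xi}>0$ so the existence of a good $\beta$ for fixed $\alpha$ is equivalent to $\disc_\beta(p)>0$, then iterates the argument in $\alpha$ and shows $\disc_\alpha(\disc_\beta(p))>0$ at $r=r_0$, where the sign is again ultimately controlled by $\Delta(r_0)>0$. Your proof instead produces explicit values $\alpha=a/(r_0^2+a^2)$, $\beta=b/(r_0^2+b^2)$ (the natural horizon-type angular velocities evaluated at $r_0$), and the algebra you sketch checks out: with $W=a\alpha\sin^2\theta+b\beta\cos^2\theta$, the identity $(1+\mu/\Sigma)W^2-W^2=\frac{\mu}{\Sigma}W^2$ gives $g(K,K)=-1+\frac{\mu}{\Sigma}(1-W)^2+(r_0^2+a^2)\alpha^2\sin^2\theta+(r_0^2+b^2)\beta^2\cos^2\theta$, the numerator of $1-W$ collapses to $r_0^2\Sigma$, the last two terms sum to $W$, and one cleanly factors to $g(K,K)|_{r=r_0}=-r_0^2\Sigma\,\Delta(r_0)/[(r_0^2+a^2)(r_0^2+b^2)]^2$. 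Both proofs hinge on the same ultimate fact, $\Delta(r_0)>0$ for $r_0\in(0,r_-)$, but yours is constructive and physically transparent: it exhibits the Killing vector, makes visible that it becomes null exactly on the horizons (where $\Delta=0$), and explains directly why the inner boundary must sit strictly inside the Cauchy horizon. The paper's discriminant argument is shorter to state but opaque about which $K$ works and why; yours trades a bit more computation for an explicit and more informative answer.
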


\begin{proof}
Clearly $K$ is tangential to $\partial \mathcal{M}$. The condition for $K$ to be timelike is equivalent to
\begin{equation*}
p(\alpha,\beta)\equiv g(K,K)= g_{\tau\tau} + 2\alpha g_{\tau\phi} + 2\beta g_{\tau\xi} + 2\alpha\beta g_{\phi\xi} + \alpha^2 g_{\phi\phi} + \beta^2 g_{\xi\xi}<0.
\end{equation*}
Denote by $\disc_x(q(x))$ the discriminant of a polynomial $q(x)$. For a given $\alpha\in\mathbb{R}$ fixed, since 
\begin{equation*}
g_{\xi\xi}=(r^2+b^2)\cos^2\theta + \frac{\mu b^2\cos^4\theta}{\Sigma}>0,
\end{equation*}
\noindent there exists a $\beta\in\mathbb{R}$ such that $p(\alpha,\beta)>0$ if and only if $\disc_\beta(p(\alpha,\beta))>0$. As $\disc_\beta(p(\alpha,\beta))$ is a degree 2 polynomial in $\alpha$, there is an $\alpha\in\mathbb{R}$ such that $\disc_\beta(p(\alpha,\beta))>0$ if $\disc_\alpha(\disc_\beta(p(\alpha,\beta)))>0$. Doing the explicit computation gives
\begin{equation*}
\disc_\alpha(\disc_\beta(p(\alpha,\beta))) = \frac{64\Delta\cos^6\theta\sin^2\theta}{\Sigma}\left((r^2+b^2)(r^2\sec^2\theta + a^2 + b^2\tan^2\theta) + \mu b^2\right).
\end{equation*}
This is strictly positive at $r=r_0$ for all $\theta\in (0,\pi/2)$, as $\Delta>0$ inside the Cauchy horizon.

\end{proof}

In the Kerr geometry \cite{Finster_2016}, the time coordinate $\tau'$ is chosen along the integral curves of the Killing vector field $K$, i.e. $K=\partial_{\tau'}$, then ``spatial'' coordinates are chosen on constant $\tau'$ hypersurfaces. Analogously, in the 5D Myers-Perry geometry we may relate the coordinate system $(\tau',r,\theta,\phi',\xi')$ to the Eddington--Finkelstein-type coordinates $(\tau,r,\theta,\phi,\xi)$ by the coordinate transformation
\begin{equation*}
\tau'=\tau, \qquad \phi'=\phi-\alpha\tau, \qquad \xi'=\xi-\beta \tau.
\end{equation*}
The induced metric on constant $\tau'$ hypersurfaces is the same as that on constant $\tau$ hypersurfaces, since we have $d\tau'=d\tau=0$, which results in $d\phi=d\phi'$ and $d\xi=d\xi'$. Furthermore, $\tau'$ is a proper time function just as the original time function $\tau$, as can be seen by the following consideration. The gradient 
\begin{equation*}
\nabla \tau'=g^{\mu'\nu'}(\partial_{\mu '} \tau') \partial_{\nu'} = g^{\tau'\nu'}\partial_{\nu'}
\end{equation*}
satisfies 
\begin{equation*}
g(\partial_{\tau'},\nabla \tau') = g^{\tau'}_{\tau'} = 1>0 \qquad \textrm{and}\qquad g(\nabla \tau', \nabla \tau')=g^{\tau'\tau'}=g^{\tau\tau}=-1-\frac{\mu}{\Sigma}<0,
\end{equation*}
showing that $\nabla \tau'$ is future-pointing and timelike. As such, the results of \cite{Finster_2016}, valid for the coordinates $(\tau',r,\theta,\phi',\xi')$, will also hold in Eddington--Finkelstein-type coordinates $(\tau,r,\theta,\phi,\xi)$.

The untransformed Dirac Hamiltonian $H$ is symmetric  \cite{Finster_2016} with respect to the scalar product
\begin{equation*}
(\psi|\phi)=\int_{\Sigma_\tau}\!\!\prec\!\! \psi | \slashed{\nu} \phi \!\!\succ d\mu
\end{equation*}
together with the boundary condition
\begin{equation*}
(\slashed{n}-i)\psi|_{\partial \Sigma_\tau}=0,
\end{equation*}
where $\mu$ is the induced measure on constant $\tau$ hypersurfaces $\Sigma_\tau$, $\nu$ is the future-directed, timelike normal, $n$ is the inner normal on $\partial\mathcal{M}$, the slashes are Clifford contraction with respect to $\tilde{\gamma}^\mu$ and
\begin{equation*}
\prec \cdot \;|\; \cdot \succ\; : S_p\mathcal{M} \times S_p\mathcal{M} \rightarrow \mathbb{C}
\end{equation*}
is the spin scalar product of signature $(2,2)$, provided by the spin structure on $\mathcal{M}$. The scalar product is independent of the choice of $\tau$.

Define the scalar product $\langle\cdot|\cdot\rangle$ on the transformed spinors $\psi',\phi'$ by
\begin{equation*}
\langle \psi' | \phi' \rangle = (\mathscr{P}^{-1}\psi'|\mathscr{P}^{-1}\phi'),
\end{equation*}
with $\mathscr{P}$ given by (\ref{Ptransform}). Note that $\langle \cdot | \cdot \rangle$ is bounded near the boundary $\partial\Sigma_\tau$, for both $\slashed{\nu}=\gamma^\alpha \nu_\alpha$ and $\mathscr{P}$ are bounded near the boundary. Let $\mathcal{H}$ and $\mathcal{G}$ be the Hilbert spaces obtained by completing the space of spinors with respect to $(\cdot|\cdot)$ and $\langle\cdot|\cdot\rangle$ respectively. By construction, the spinor transformation $\psi \mapsto \psi'=\mathscr{P}\psi$ is a Hilbert space isometry $\mathcal{H} \rightarrow \mathcal{G}$. Since $\mathscr{P}$ is time-independent, the Dirac equation in Hamiltonian form (\ref{ham}) is equivalent to
\begin{equation*}
\left(i\partial_\tau - \mathscr{P}H\mathscr{P}^{-1}\right)\psi'=0,
\end{equation*}
and therefore $H'=\mathscr{P}H\mathscr{P}^{-1}$. Then, clearly $H'$ is symmetric with respect to $\langle\cdot|\cdot\rangle$ on spinors that vanish on $\partial \mathcal{M}$, for
\begin{equation*}
\langle\psi'|H'\phi'\rangle = \langle \phi'|\mathscr{P}H\mathscr{P}^{-1}\phi'\rangle = (\mathscr{P}^{-1}\psi'|H\mathscr{P}^{-1}\phi') = (H\mathscr{P}^{-1} \psi' | \mathscr{P}^{-1}\phi') = \langle H'\psi'|\phi'\rangle.
\end{equation*}
For spinors that may be nonzero on $\partial\mathcal{M}$, the boundary condition that one must impose to preserve the symmetry of the Hamiltonian are such that $\prec\!\!\psi'|\slashed{n}\phi'\!\!\succ$ vanishes on $\partial \Sigma_\tau$ \cite{Finster_2016,Finster_2018}. As in the Kerr geometry \cite{Finster_2018}, it is sufficient to require that
\begin{equation*}
\left(\slashed{n}-i\mathscr{P}(\mathscr{P}^\dag)^{-1}\right)\psi'|_{\partial \mathcal{M}}=0.
\end{equation*}
We therefore can apply \cite[Theorem 1.2]{Finster_2016} and conclude that

\begin{thm}
The Dirac Hamiltonian $H'$ in the non-extreme 5D Myers--Perry geometry in Eddington--Finkelstein-type coordinates with domain of definition
\begin{equation}\label{domain}
\mathrm{Dom}(H')=\{\psi'\in C_0^\infty(\Sigma_\tau)^4|\left(\slashed{n}-i\mathscr{P}(\mathscr{P}^\dag)^{-1}\right)(H'^p\psi')|_{\partial \Sigma_\tau}=0 \text{ for each integer } p\geq 0\}
\end{equation}
is essentially self-adjoint.
\end{thm}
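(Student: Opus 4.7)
The plan is to invoke \cite[Theorem 1.2]{Finster_2016} directly, with the role of the verification reduced to checking that all the geometric and analytic hypotheses of that theorem are realized by the data assembled in this section. First I would confirm the geometric setup: the manifold $\mathcal{M}$ equipped with the analytically extended metric (\ref{EFmetric}) is a smooth Lorentzian spin manifold with a timelike inner boundary $\partial\mathcal{M}=\{r=r_0\}$ lying strictly inside the Cauchy horizon, and the preceding lemma supplies the key ingredient required by \cite{Finster_2016}, namely a Killing vector field $K=\partial_\tau+\alpha\partial_\phi+\beta\partial_\xi$ that is both tangential to $\partial\mathcal{M}$ and timelike there. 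After the time-independent change of coordinates $(\tau,\phi,\xi)\mapsto(\tau',\phi',\xi')$ sending $K$ to $\partial_{\tau'}$, the constant-$\tau'$ hypersurfaces coincide setwise with the constant-$\tau$ hypersurfaces and inherit the same positive definite induced metric, and $\nabla\tau'$ is future-pointing and timelike by the computation above, so $\tau'$ is a proper time function in the sense needed to formulate the Cauchy problem of \cite{Finster_2016}.

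Second, I would verify that $H'=\mathscr{P}H\mathscr{P}^{-1}$ with the domain (\ref{domain}) is symmetric on the Hilbert space $\mathcal{G}$. The untransformed Dirac Hamiltonian $H$ is symmetric on $\mathcal{H}$ with the MIT-bag-type boundary condition $(\slashed{n}-i)\psi|_{\partial\mathcal{M}}=0$ by \cite{Finster_2016}, and the Hilbert-space isometry $\mathscr{P}\colon\mathcal{H}\to\mathcal{G}$ transports symmetry of $H$ to symmetry of $H'$ provided the boundary term $\prec\!\psi'|\slashed{n}\phi'\!\succ$ vanishes on $\partial\Sigma_\tau$; this is exactly the content of the conjugated boundary condition $(\slashed{n}-i\mathscr{P}(\mathscr{P}^\dag)^{-1})\psi'|_{\partial\mathcal{M}}=0$ appearing in (\ref{domain}). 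The further requirement that $H'^p\psi'$ satisfy the same boundary condition for every integer $p\geq 0$ ensures that $\mathrm{Dom}(H')$ is invariant under $H'$, which is precisely the condition that Theorem 1.2 of \cite{Finster_2016} uses to promote symmetry to essential self-adjointness of the non-uniformly elliptic operator.

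Third, I would check the remaining analytic hypotheses of \cite[Theorem 1.2]{Finster_2016}: the compactness of the spatial boundary $\partial\Sigma_\tau\simeq S^3$, which is immediate from the form of $\mathcal{M}$; the boundedness of $\slashed{\nu}$ and of $\mathscr{P}$ in a collar neighbourhood of $\{r=r_0\}$, which follows because $r_0<r_-$ lies strictly inside the Cauchy horizon where $\Delta,\Sigma$ and $\mathscr{P}$ in (\ref{Ptransform}) are smooth and nonvanishing; and the uniform bound on the inward Dirac current at $\partial\mathcal{M}$, which is the 5D analogue of the estimate carried out in the Kerr case in \cite{Finster_2018}.

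The main obstacle I anticipate is ensuring that the uniform positivity of the current and the current-conservation identity used in \cite{Finster_2016} transfer without loss from the 4D Kerr setting of \cite{Finster_2018} to the present 5D geometry. This reduces to checking that the components of $\slashed{n}$ and of $\alpha^j$ in the $(\tau',r,\theta,\phi',\xi')$ coordinates remain uniformly bounded on $\partial\mathcal{M}$ and that the modified inner product $\langle\cdot|\cdot\rangle$ dominates the natural $L^2$ structure on $\Sigma_{\tau'}$; both are routine given the smoothness of (\ref{EFmetric}), the explicit form of $\mathscr{P}$, and the fact that the Killing field $K$ produced by the preceding lemma is uniformly timelike on $\partial\mathcal{M}$ by continuity and compactness of $\partial\Sigma_\tau$. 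With these verifications in place, \cite[Theorem 1.2]{Finster_2016} applies and yields essential self-adjointness of $H'$ on $\mathrm{Dom}(H')$.
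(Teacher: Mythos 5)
Your proposal is correct and follows essentially the same route as the paper: both reduce the theorem to an application of \cite[Theorem 1.2]{Finster_2016}, with the decisive geometric input being the lemma producing a Killing field $K=\partial_\tau+\alpha\partial_\phi+\beta\partial_\xi$ that is tangential and timelike on $\partial\mathcal{M}$, followed by the coordinate change to $(\tau',\phi',\xi')$, the verification that $\tau'$ is a proper time function, and the transport of symmetry of $H$ to $H'=\mathscr{P}H\mathscr{P}^{-1}$ under the conjugated boundary condition $(\slashed{n}-i\mathscr{P}(\mathscr{P}^\dag)^{-1})\psi'|_{\partial\mathcal{M}}=0$. The additional checks you list (compactness of $\partial\Sigma_\tau$, boundedness of $\slashed{\nu}$ and $\mathscr{P}$ near the boundary) are exactly the remarks the paper makes before and after invoking the theorem.
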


Note that $\mathrm{Dom}(H')$ is dense in $L^2(\Sigma_\tau,\langle \cdot | \cdot \rangle)^4$, for it is a subset of $C_0^\infty(\Sigma_\tau)^4$ with boundary conditions on a compact boundary $\partial\Sigma_\tau\cong S^3$, and the scalar product $\langle\cdot|\cdot\rangle$ is bounded near $\partial\Sigma_\tau$.

\section{Angular spectral projectors}\label{angularprojectors}

In order to write the resolvents of the Dirac Hamiltonian $H'$ in terms of the Green's matrix of the radial ODE (\ref{radial}), we need to decompose the Hilbert space of spinors into eigenspaces of the angular Dirac operator given in (\ref{angular}). The main difficulty in doing so is that in order to use Stone's formula, one must consider the angular operator $A=A(\omega_\epsilon)$ with slightly complex frequencies $\omega_\epsilon \equiv \omega \pm i\epsilon$, which is no longer self-adjoint. In this section, we use the method of slightly non-self-adjoint perturbations \cite{Finster_2006} to construct spectral projectors $\{Q^\pm_l\}_{ l\in \mathbb{Z}}$ onto one-dimensional eigenspaces of $A(\omega_\epsilon)$ for small masses $|m|$ and suitably bounded frequencies $|\omega|$. To do so, we use the following result from \cite{Finster_2006}, adapted for our purposes.

\begin{thm}\label{slightlynonselfadjoint}
    Suppose that $T$ is a self-adjoint operator with nondegenerate spectrum $\{\lambda_n\}_{n\in\mathbb{Z}}$ and that there exists an $\epsilon>0$ such that $\lambda_{n+1}-\lambda_n>\epsilon$ for all $n$. Suppose also that $W$ is a bounded operator satisfying $\lVert W\rVert_{op}\leq \epsilon/16$ and let $A=T+W$. Then there exists a family of spectral projectors $\{Q_l\}_{l\in\mathbb{Z}}$ such that
    \begin{itemize}
        \item The image of $Q_l$ is a 1-dimensional eigenspace of $A$ for each $l$.
        \item $Q_l$ is a projection for each $l$, that is, $Q_l^2=Q_l$.
        \item $\{Q_l\}_{l\in\mathbb{Z}}$ is complete, that is, $\sum_{l} Q_l = \mathds{1}$ with strong convergence of the series.
    \end{itemize}
\end{thm}

In \cite[Section 8]{Finster_2006}, Theorem \ref{slightlynonselfadjoint} is shown for an unperturbed operator $T$ with positive spectrum $0\leq \lambda_1 < \lambda_2 < \cdots$. Nonetheless, the same proof works for a two-sided spectrum $\cdots < \lambda_{-1}<\lambda_0<\lambda_1<\cdots$.

The angular operator $A$ of (\ref{angular}) acts on $L^2((0,\pi/2),\sin\theta\cos\theta\,d\theta)^2$. It may be written as
\begin{equation*}
    A(\omega_\epsilon) = \mathcal{D}^{k_ak_b}_{S^3} +mp\sigma_3 +\omega V \pm i\epsilon V,
\end{equation*}
where $L_\theta$ is as in (\ref{angularabbrevs}), $\sigma_3$ is one of the Pauli matrices shown in (\ref{paulimatrices}),
\begin{equation}\label{DS3biazimuthal}
    \mathcal{D}^{k_ak_b}_{S^3} = \begin{pmatrix}
        -\frac{k_b+1/2}{\cos\theta} & L_\theta - \frac{k_a+1/2}{\sin\theta}\\
        -L_\theta - \frac{k_a+1/2}{\sin\theta} & \frac{k_b+1/2}{\cos\theta}
        \end{pmatrix},\quad V= \begin{pmatrix}
            -\frac{ab}{p} & -\frac{(a^2-b^2)\sin\theta\cos\theta}{p} \\
            -\frac{(a^2-b^2)\sin\theta\cos\theta}{p} & \frac{ab}{p}
        \end{pmatrix}.
\end{equation}
To apply Theorem \ref{slightlynonselfadjoint}, we will consider the unperturbed operator to be $A(\omega)=\mathcal{D}^{k_ak_b}_{S^3} + mp\sigma_3 + \omega V$, and denote its eigenvalues by $\{\lambda_l\}_{l\in\mathbb{Z}}$. By Proposition \ref{angularnondegenerate}, it suffices to require that consecutive eigenvalues of $A(\omega)$ be separated by a uniform nonzero gap $\lambda_{l+1}-\lambda_l>\epsilon>0$ for all $l\in \mathbb{Z}$. For general $m$ and $\omega$, it is not clear how to obtain such a spectral gap.

The key is to notice that $\mathcal{D}^{k_ak_b}_{S^3}$ is equal to the intrinsic Dirac operator on $S^3$ with the usual metric and a particular choice of representation for the Clifford algebra, restricted to the bi-azimuthal mode $(k_a,k_b)$. More precisely, the Dirac operator on $S^3$ in Hopf coordinates $(\theta,\varphi,\psi)$ is given \cite{Daud__2012} by (\ref{DS3}). Choosing the 3-dimensional representation $\Gamma^2=\sigma_2,\; \Gamma^3 = -\sigma_1,\; \Gamma^5 = -\sigma_3$, where $\sigma_j$ are the Pauli matrices (\ref{paulimatrices}), and taking $\partial_\varphi\to -i(k_a+1/2)$ and $\partial_\psi\to -i(k_b+1/2)$, we see that (\ref{DS3}) agrees exactly with (\ref{DS3biazimuthal}), as required.

The spectrum of the Dirac operator on $S^3$, acting on $L^2(S^3)^2$, is known \cite{bar} to be $\sigma(\mathcal{D}_{S^3})=\{\pm \frac{3}{2}, \pm \frac{5}{2}, \pm \frac{7}{2},\dots\}$, and as a result $\sigma(\mathcal{D}^{k_ak_b}_{S^3})\subset \{\pm \frac{3}{2}, \pm \frac{5}{2}, \pm \frac{7}{2},\dots\}$. In particular, any two eigenvalues of $\mathcal{D}^{k_ak_b}_{S^3}$ differ by at least $1$ in absolute value. Considering $A(\omega)$ as a bounded perturbation of $\mathcal{D}^{k_ak_b}_{S^3}$ and denoting by $\lVert\cdot\rVert_{op}$ the operator norm on $L^2((0,\pi/2), \sin\theta\cos\theta\,d\theta)^2$ and by $\lVert\cdot\rVert$ the $\theta$-pointwise matrix norm, we estimate
\begin{equation}\label{perturbationestimate}
    \lVert A(\omega)-\mathcal{D}^{k_ak_b}_{S^3}\rVert_{op} = \lVert mp\sigma_3 + \omega V\rVert_{op} \leq \sup_\theta \lVert mp(\theta)\sigma_3 + \omega V(\theta)\rVert \leq \sup_{\theta}\left(|m||p(\theta)|+|\omega|\lVert V(\theta)\rVert\right).
\end{equation}
In (\ref{perturbationestimate}), we used the fact that for a zeroth order matrix operator $T(\theta)$ on $L^2$, $\lVert T \rVert_{op}\leq \sup_\theta \lVert T(\theta)\rVert$. An explicit computation of the eigenvalues of $V(\theta)$ yields, recalling that $p=\sqrt{a^2\cos^2\theta + b^2\sin^2\theta}$,
\begin{align*}
    \lVert V(\theta)\rVert^2 &= \frac{a^2b^2}{p^2} + \frac{(a^2-b^2)^2\sin^2\theta\cos^2\theta}{p^2}\\
    &= \frac{a^2b^2(\cos^4\theta + \sin^4\theta)+(a^4+b^4)\sin^2\theta\cos^2\theta}{a^2\cos^2\theta + b^2\sin^2\theta}\\
    &= a^2\sin^2\theta + b^2\cos^2\theta.
\end{align*}

As the spectral gaps of $\mathcal{D}^{k_ak_b}_{S^3}$ are of size 1, in order to obtain a nonzero uniform gap between each pair of eigenvalues of $A(\omega)$, it suffices by standard perturbation theory \cite[Theorem 4.10]{kato} to require that $\lVert A(\omega)-\mathcal{D}^{k_ak_b}_{S^3}\rVert_{op}<\frac{1}{2}$. In terms of the above estimates, the latter bound is implied by
\begin{equation*}
    \sup_{\theta}\left( |m|\sqrt{a^2\cos^2\theta+b^2\sin^2\theta} + |\omega|\sqrt{a^2\sin^2\theta + b^2\cos^2\theta}\right)<\frac{1}{2},
\end{equation*}
which in turn is implied by
\begin{equation*}
	|m|+|\omega|< \frac{1}{2\max(|a|,|b|)}.
\end{equation*}
In particular, we must have 
\begin{equation}\label{mbound}
    |m|< \frac{1}{2\max(|a|,|b|)}.
\end{equation}
Then, it would suffice to require that
\begin{equation}\label{omegabound}
    |\omega|< \frac{1}{2\max(|a|,|b|)}-|m|.
\end{equation}
In which case, by \cite[Theorem 4.10]{kato}, the consecutive eigenvalues of $A(\omega)$ would satisfy
\begin{equation}\label{spectralgap}
    \lambda_{l+1}-\lambda_l \geq 1- 2\lVert A(\omega)-\mathcal{D}^{k_ak_b}_{S^3}\rVert_{op} \geq 1-2(|m|+|\omega|)\max (|a|,|b|).
\end{equation}
Notice that (\ref{omegabound}) implies that the right-hand side of (\ref{spectralgap}) is strictly positive. In order to satisfy the hypothesis of Theorem \ref{slightlynonselfadjoint} on the norm of the non-self-adjoint perturbation, it suffices then to require that for all $l\in \mathbb{Z}$,
\begin{equation*}
    \lVert \pm i\epsilon V\rVert_{op} \leq \epsilon \sup_\theta \sqrt{a^2\sin^2\theta+b^2\cos^2\theta}\leq \epsilon \max(|a|,|b|) < \frac{1}{16}(\lambda_{l+1}-\lambda_l).
\end{equation*}
This is implied by
\begin{equation*}
    \epsilon <  \frac{1}{16\max(|a|,|b|)} - \frac{|m|+|\omega|}{8}.
\end{equation*}
We have therefore shown the following proposition.
\begin{pro}\label{Qlprop}
    If $|m|< \frac{1}{2\max(|a|,|b|)}$ and $0\leq E_0<\frac{1}{2\max(|a|,|b|)}-|m|$, then there exists an $\epsilon>0$ such that whenever $|\omega|\leq E_0$, there exists a complete set of spectral projectors $\{Q^\pm_l\}_{l\in \mathbb{Z}}$ onto the 1-dimensional eigenspaces of $A(\omega_\epsilon) = A(\omega \pm i\epsilon)$.
\end{pro}

\section{Integral spectral representation}\label{intrep}

With the essential self-adjointness of $H'$ and the existence of angular projectors $Q^\pm_l$ established, we are now ready to construct in this section an integral spectral representation for the Dirac propagator through the horizons of the 5D Myers--Perry geometry for initial data with small $m$ and frequency spectrum contained in an interval $(-E_0, E_0)$ with $E_0$ as in Proposition \ref{Qlprop}. More specifically, we will use a variant of Stone's formula \cite{kronthaler}, which states in our context that
\begin{multline}\label{stone}
    \frac{1}{2}e^{-i\tau H'_{k_ak_b}}(P_{[-E_0,E_0]}+P_{(-E_0,E_0)})\psi'_{0,k_a,k_b}\\
    =\frac{1}{2\pi i}\lim_{\epsilon\searrow 0} \int_{-E_0}^{E_0} e^{-i\omega\tau} \left[ (H'_{k_ak_b}-\omega-i\epsilon)^{-1}-(H'_{k_ak_b}-\omega+i\epsilon)^{-1}\right] \psi'_{0,k_a,k_b}\, d\omega.
\end{multline}
The requirement we impose on the frequency spectrum of the initial data is most conveniently written as
\begin{equation}\label{lowenergy}
    \psi'_0\in \mathrm{im}(P_{(-E_0,E_0)}),
\end{equation}
where $P$ is the spectral projector for the Dirac Hamiltonian $H'$. In this case, we have
\begin{equation}\label{lowenergyidentity}
    \frac{1}{2}(P_{[-E_0,E_0]}+P_{(-E_0,E_0)})\psi'_0 = \psi'_0.
\end{equation}
As the Dirac Hamiltonian commutes with $\partial_\phi$ and $\partial_\xi$, (\ref{lowenergyidentity}) also holds for each bi-azimuthal mode, that is,
\begin{equation}\label{lowenergybiazimuthal}
\frac{1}{2}(P_{[-E_0,E_0]}+P_{(-E_0,E_0)})\psi'_{0,k_a,k_b} = \psi'_{0,k_a,k_b} \quad \mathrm{for\: each} \quad k_a,k_b\in \mathbb{Z}.
\end{equation}
Before stating the main result of this section, it is convenient to recast the radial ODE (\ref{radial}) as 
\begin{equation*}
\mathcal{R}^{2\times 2} \begin{pmatrix}
X_+\\
X_-
\end{pmatrix}=0,
\end{equation*}
where
\begin{equation}\label{R22}
\mathcal{R}^{2\times 2}=\begin{pmatrix}
r_+^3\mathcal{R}_+ & \overline{S}_l\\
S_l & r_+^{-3}\mathcal{R}_-
\end{pmatrix},
\end{equation}
with
\begin{align*}
\mathcal{R}_+&=\frac{1}{r^2}\partial_r - \frac{1}{2r^3} + \frac{i\omega}{r^2}\\
\mathcal{R}_-&= \Delta \partial_r - 2i U(r) + \frac{\partial_r\Delta}{2} + \frac{\Delta}{2r} - i\omega(\Delta + 4Mr^2)
\end{align*}
and $U(r)$, $S_l(r)$ as in (\ref{Ur}), (\ref{Sr}) respectively, with the replacement $\lambda\to \lambda^\pm_l$.

\begin{thm}\label{intreptheorem}
Suppose that $\psi(\tau,r,\theta,\phi,\xi)$ satisfies the massive Dirac equation (\ref{dirac}) across the horizons of the non-extreme 5-dimensional Myers--Perry geometry with $\psi_0(\cdot)\equiv \psi(0,\cdot)\in C^\infty_0 ((r_0,\infty)\times S^3)^4$ for some $0<r_0<r_-$ and with $|m|< \frac{1}{2\max(|a|,|b|)}$. Suppose furthermore that there exists an $E_0< \frac{1}{2\max(|a|,|b|)}-|m|$ such that $\psi_0' = \mathscr{P}\psi_0 \in \mathrm{im}(P_{(-E_0,E_0)})$, where $P$ is the spectral projector for the Dirac Hamiltonian $H'$.

Then, the transformed solution $\psi'(\tau,r,\theta,\phi,\xi)$ admits the integral spectral representation
\begin{multline}\label{intrepformula}
\psi'(\tau,r,\theta,\phi,\xi)=\frac{1}{2\pi i}\sum_{k_a,k_b\in\mathbb{Z}} e^{-i((k_a+\frac{1}{2})\phi + (k_b+\frac{1}{2})\xi)} \\
\times \lim_{\epsilon\searrow 0} \int_{-E_0}^{E_0}e^{-i\omega\tau} \left((H'_{k_ak_b} - \omega - i\epsilon)^{-1} - (H'_{k_ak_b} - \omega + i\epsilon)^{-1}\right)(r,\theta;r',\theta')\psi'_{0,k_a,k_b}(r',\theta')\, d\omega,
\end{multline}
where $\psi'_{0,k_a,k_b}\in C_0^\infty\left( (r_0,\infty)\times [0,\pi/2]\right)^4$ is the initial data for fixed $k_a,k_b$ and $(H'_{k_ak_b}-\omega\mp i\epsilon)^{-1}$ are the resolvents of the Dirac Hamiltonian on the upper and lower complex half-planes. The resolvents are unique and of the form
\begin{multline*}
(H'_{k_ak_b}-\omega\mp i\epsilon)^{-1}(r,\theta;r',\theta')\psi'_{0,k_a,k_b}(r',\theta')\\
=-\sum_{l\in\mathbb{Z}}\int_{-1}^1 Q^\pm_l(\theta;\theta')\int_{r_0}^\infty \mathscr{C} \begin{pmatrix}
G(r;r')_{k_a,k_b,\omega\pm i\epsilon} & \mathbf{0}\\
\mathbf{0} & G(r;r')_{k_a,k_b,\omega\pm i\epsilon}
\end{pmatrix} \mathscr{E}(r',\theta')\psi'_{0,k_a,k_b}(r',\theta')\, dr'\, d\left(\cos\theta'\right)
\end{multline*}
with $Q^\pm_l(\cdot,\cdot)$ the integral kernel of the spectral projector onto a 1-dimensional, invariant subspace of the angular operator (\ref{angular}) corresponding to the angular eigenvalue $\lambda^\pm_l$ and the frequency $\omega\pm i\epsilon$, $G(r;r')_{k_a,k_b,\omega\pm i\epsilon}$ the two-dimensional Green matrix of the radial first order ODE (\ref{radial}), 
\begin{equation*}
\mathscr{C}=\begin{pmatrix}
1 & 0 & 0 & 0\\
0 & 0 & 0 & 1\\
0 & 1 & 0 & 0\\
0 & 0 & 1 & 0
\end{pmatrix},
\end{equation*}
and
\begin{equation*}
\mathscr{E}(r,\theta)=\begin{pmatrix}
\frac{ir_+^3}{r^2} & -\frac{\sin\theta\cos\theta(a^2-b^2)}{p} & ab\left(-\frac{i}{r}-\frac{1}{p}\right) & 0\\
ab\left(\frac{i}{r}-\frac{1}{p}\right) & 0 & -\frac{i}{r_+^3}(\Delta+4Mr^2) & -\frac{\sin\theta\cos\theta(a^2-b^2)}{p}\\
0 & ab\left(-\frac{i}{r}+\frac{1}{p}\right) & -\frac{\sin\theta\cos\theta(a^2-b^2)}{p} & \frac{ir_+^3}{r^2}\\
-\frac{\sin\theta\cos\theta(a^2-b^2)}{p} & -\frac{i}{r_+^3}(\Delta+4Mr^2) & 0 & ab\left(\frac{i}{r}+\frac{1}{p}\right)
\end{pmatrix}.
\end{equation*}
For a general $\psi'_0\in C^\infty_0((r_0,\infty)\times S^3)^4$ that may not satisfy (\ref{lowenergy}) but nonetheless must satisfy the mass restriction (\ref{mbound}), the right-hand side of (\ref{intrepformula}) is instead equal to
\begin{equation}\label{highenergyintrep}
    \frac{1}{2}e^{-i\tau H'}(P_{[-E_0,E_0]}+P_{(-E_0,E_0)})\psi'_0(r,\theta,\phi,\xi).
\end{equation}
\end{thm}

Before we begin the proof, we remark that the matrix $\mathscr{E}(r,\theta)$, up to some column swaps and factors of the event horizon radius $r_+$, is similar in form to the analogous matrix $\mathscr{E}_K(r,\theta)$ arising in the corresponding integral spectral representation in the Kerr geometry \cite{Finster_2018}, namely
\begin{equation*}
\mathscr{E}_K(r,\theta)=-\begin{pmatrix}
i(\Delta_K + 4Mr) & r_+a\sin\theta & 0 & 0\\
0 & 0 & -ir_+ & a\sin\theta\\
0 & 0 & r_+a\sin\theta & i(\Delta_K + 4Mr)\\
a\sin\theta & -ir_+ & 0 & 0
\end{pmatrix}.
\end{equation*}

\begin{proof}[Proof of Theorem \ref{intreptheorem}]
Consider a spinor $\psi'$ with complex frequency $\omega_\epsilon$ which is an $i\partial_\phi$ and $i\partial_\xi$ eigenstate, namely 
\begin{equation*}
\psi'=e^{-i\omega_\epsilon \tau + (k_a+\frac{1}{2})\phi + (k_b+\frac{1}{2})\xi} \Psi(r,\theta).
\end{equation*}
The Dirac equation then takes the form
\begin{equation}\label{kakbdirac}
(H'_{k_ak_b}-\omega_\epsilon)\Psi=0.
\end{equation}
Since the hypotheses of Proposition \ref{Qlprop} are satisfied, there exists for each $l\in\mathbb{Z}$ an idempotent angular spectral projector
\begin{equation*}
Q_l^\pm\Psi = \int_{-1}^1Q_l^\pm(\theta;\theta')\Psi(r,\theta')\, d\left(\cos\theta'\right)
\end{equation*}
onto a finite-dimensional, invariant subspace of the angular operator $A(\omega\pm i\epsilon)$ (\ref{angular}) corresponding to the angular eigenvalue $\lambda^\pm_l$. The angular operator may then be written as 
\begin{equation*}
A(\omega \pm i\epsilon)=\sum_{l\in\mathbb{Z}} \lambda^\pm_l Q^\pm_l.
\end{equation*}
Using the relation $H'=N\mathcal{D}_0$, (\ref{kakbdirac}) can be rewritten as 
\begin{equation*}
N(\mathcal{D}_0 - \omega_\epsilon N^{-1})\Psi=\frac{1}{r}N(\mathcal{R}_{\omega_\epsilon} + \mathcal{A}_{\omega_\epsilon})\Psi = 0.
\end{equation*}
On fixed $k_a,k_b$-modes, the angular system (\ref{angular}) is equivalent to $\mathcal{A}\Psi=\sum_{l\in\mathbb{Z}}\gamma^1\lambda^\pm_l Q^\pm_l\Psi$. We therefore have
\begin{equation*}
\frac{1}{r}N\sum_{l\in\mathbb{Z}}(\mathcal{R}_{\omega_\epsilon} + \gamma^1\lambda^\pm_l)Q^\pm_l\Psi = 0.
\end{equation*}
We note that we may write the above in terms of the radial operator  (\ref{R22}) with $\omega\to\omega_\epsilon$, $\lambda\to\lambda^\pm_l$ as
\begin{equation*}
\mathcal{R}+\gamma^1\lambda^\pm_l = \begin{pmatrix}
i \mathcal{R}_+ & 0 & i\overline{S}_l & 0 \\
0 & -i\mathcal{R}_- & 0 & -iS_l\\
-iS_l & 0 & -i\mathcal{R}_- & 0\\
0 & i\overline{S}_l & 0 & i\mathcal{R}_+
\end{pmatrix}=-i\Gamma^1 \mathscr{C} \begin{pmatrix}
\mathcal{R}^{2\times 2} & \mathbf{0} \\
\mathbf{0} & \mathcal{R}^{2\times 2}
\end{pmatrix}\mathscr{C}^{-1},
\end{equation*}
with
\begin{equation*}
\mathscr{C}=\begin{pmatrix}
1 & 0 & 0 & 0\\
0 & 0 & 0 & 1\\
0 & 1 & 0 & 0\\
0 & 0 & 1 & 0
\end{pmatrix}.
\end{equation*}
Defining $\mathscr{E}^{-1}(r,\theta)=\frac{i}{r} N(r,\theta)\Gamma^1\mathscr{C}$, we obtain
\begin{equation*}
-\mathscr{E}^{-1}(r,\theta) \sum_{l\in\mathbb{Z}}\begin{pmatrix}
\mathcal{R}^{2\times 2} & \mathbf{0} \\
\mathbf{0} & \mathcal{R}^{2\times 2}
\end{pmatrix}\mathscr{C}^{-1}Q^\pm_l\Psi=0.
\end{equation*}
Using (\ref{Ninverse}) yields, via an explicit computation,
\begin{multline*}
\mathscr{E}(r,\theta)=-i\mathscr{C}^{-1}\Gamma^1N^{-1}(r,\theta)\\
=-i\mathscr{C}^{-1}\Gamma^1\Biggl[\left(\frac{\Delta}{2r_+^3}+\frac{2Mr^2}{r_+^3}+\frac{r_+^3}{2r^2}\right) + \Gamma^1\left(\frac{\Delta}{2r_+^3}+\frac{2Mr^2}{r_+^3}-\frac{r_+^3}{2r^2}\right) + \Gamma^3 \frac{\sin\theta\cos\theta(a^2-b^2)}{p} + \Gamma^5\frac{ab}{p} + \gamma^0 \frac{iab}{r}\Biggr]\\
=\begin{pmatrix}
\frac{ir_+^3}{r^2} & -\frac{\sin\theta\cos\theta(a^2-b^2)}{p} & ab\left(-\frac{i}{r}-\frac{1}{p}\right) & 0\\
ab\left(\frac{i}{r}-\frac{1}{p}\right) & 0 & -\frac{i}{r_+^3}(\Delta+4Mr^2) & -\frac{\sin\theta\cos\theta(a^2-b^2)}{p}\\
0 & ab\left(-\frac{i}{r}+\frac{1}{p}\right) & -\frac{\sin\theta\cos\theta(a^2-b^2)}{p} & \frac{ir_+^3}{r^2}\\
-\frac{\sin\theta\cos\theta(a^2-b^2)}{p} & -\frac{i}{r_+^3}(\Delta+4Mr^2) & 0 & ab\left(\frac{i}{r}+\frac{1}{p}\right)
\end{pmatrix}.
\end{multline*}
By the standard theory of ODEs \cite{roach_1982}, there exist Green's functions $G(r;r')$ solving for each $\omega_\epsilon, l, k_a, k_b$ the distributional equation 
\begin{equation*}
\mathcal{R}^{2\times 2}(\partial_r;r) G(r;r')=\delta(r-r')\mathds{1}_2,
\end{equation*}
in terms of which the resolvent of the Dirac Hamiltonian is
\begin{equation*}
(H'_{k_ak_b}-\omega_\epsilon)^{-1}\Psi = -\sum_{l\in\mathbb{Z}} Q^\pm_l \int_{r_0}^\infty \mathscr{C} \begin{pmatrix}
G(r;r') & 0 \\
0 & G(r;r')
\end{pmatrix} \mathscr{E}(r',\theta) \Psi(r',\theta)\, dr'.
\end{equation*}
The identity 
\begin{equation*}
(H'_{k_ak_b}-\omega_\epsilon)(H'_{k_ak_b}-\omega_\epsilon)^{-1}\Psi=\Psi
\end{equation*}
may be verified in the exact same manner as in \cite{Finster_2018}. 

To derive the integral spectral representation (\ref{intrepformula}), we expand the smooth, compactly supported initial data $\psi'_0$ in terms of $k_a,k_b$-modes as 
\begin{equation*}
\psi' = e^{-i\tau H'} \psi'_0 = \sum_{k_a,k_b\in\mathbb{Z}} e^{-i((k_a+\frac{1}{2})\phi + (k_b+\frac{1}{2})\xi)}e^{-i\tau H'_{k_ak_b}}\psi'_{0,k_a,k_b}.
\end{equation*}
The form of the integral representation (\ref{intrepformula}) then follows from Stone's formula (\ref{stone}), the fact that identity (\ref{lowenergybiazimuthal}) holds for $\psi_0'\in \mathrm{im}(P_{(-E_0,E_0)})$ and the same considerations as in \cite{Finster_2018}. In the case $\psi_0'\notin \mathrm{im}(P_{(-E_0,E_0)})$, (\ref{highenergyintrep}) is precisely the left-hand side of Stone's formula (\ref{stone}) summed over the bi-azimuthal modes.
\end{proof}

In the future, it would be of interest to prove Proposition \ref{Qlprop} without the restriction $|m|+|\omega|< \frac{1}{2\max (|a|,|b|)}$, using another method than that of \cite{Finster_2006}. This would enable the construction of an integral spectral representation for the full Dirac propagator, which would open the door to the study of the Green's functions $G(r; r')_{k_a,k_b,\omega\pm i\epsilon}$ using Jost equation methods, as in \cite{Finster_2018} and \cite{kronthaler}.

\bigskip

\noindent \textbf{Acknowledgements: }The author thanks Niky Kamran for discussions, guidance and careful proofreading of the manuscript, as well as Felix Finster for helpful comments and discussions. The author also thanks the referee for valuable comments and suggestions. This work was supported by the Natural Sciences and Engineering Research Council (NSERC) Undergraduate Student Research Award (USRA) program and by NSERC grant RGPIN 105490-2018.
\pagebreak

\appendix

\section{Regular orthonormal frame and connection coefficients}\label{frameappendix}

In this appendix, we transform the frame (\ref{wuframe}) defined on the 5D Myers--Perry geometry into a regular orthonormal frame across the event and Cauchy horizons by using Eddington--Finkelstein-type coordinates and the regularising local Lorentz transformation (\ref{regularizing}). We then present explicit formulae for the components of the connection 1-form and the spinor connection in this regular frame, computed using the first Cartan structure equation.

The 5-dimensional Myers--Perry black hole has a frame consisting of a pair of real principal null vectors $\{\mathbf{l},\mathbf{n}\}$, a pair of complex-conjugate principal null vectors $\{\mathbf{m},\mathbf{\overline{m}}\}$, and a spatial vector $\mathbf{k}$. Letting $(\cdot,\cdot)$ denote the metric scalar product, the above frame satisfies $(\mathbf{l},\mathbf{n})=-1$, $(\mathbf{m},\mathbf{\overline{m}})=1$ and all other pairwise products vanish. In Boyer--Lindquist coordinates, it reads \cite{Wu_2008}
\begin{equation}\label{wuframe}
	\begin{split}
		\mathbf{l}&=\frac{(\rho^2+a^2)(\rho^2+b^2)}{\abs{\Delta}}\left( \partial_t + \frac{a}{\rho^2+a^2}\partial_\varphi + \frac{b}{\rho^2+b^2}\partial_\psi \right) + \sgn(\Delta)\partial_\rho\\
		\mathbf{n}&=\sgn(\Delta)\frac{(\rho^2+a^2)(\rho^2+b^2)}{2\rho^2\Sigma}\left( \partial_t + \frac{a}{\rho^2+a^2}\partial_\varphi + \frac{b}{\rho^2+b^2}\partial_\psi \right) - \frac{\abs{\Delta}}{2\rho ^2\Sigma}\partial_\rho\\
		\mathbf{m}&=\frac{1}{\sqrt{2}(\rho +ip)}\left(\partial_\vartheta + i \frac{\sin\vartheta\cos\vartheta}{p}\left((a^2-b^2)\partial_t + \frac{a}{\sin^2\vartheta}\partial_\varphi - \frac{b}{\cos^2\vartheta}\partial_\psi\right)\right)\\
		\mathbf{\overline{m}}&=\frac{1}{\sqrt{2}(\rho -ip)}\left(\partial_\vartheta - i \frac{\sin\vartheta\cos\vartheta}{p}\left((a^2-b^2)\partial_t + \frac{a}{\sin^2\vartheta}\partial_\varphi - \frac{b}{\cos^2\vartheta}\partial_\psi\right)\right)\\
		\mathbf{k}&=\frac{1}{\rho p}(ab\partial_t + b\partial_\varphi + a\partial_\psi),
	\end{split}
\end{equation}
where $p=\sqrt{a^2\cos^2\theta + b^2\sin^2\theta}$. Let $r_+=\rho_+$ denote the event horizon radius. Applying a local Lorentz transformation $\mathbf{l}'=\varsigma \mathbf{l}$, $\mathbf{n}'=\varsigma^{-1} \mathbf{n}$, $\mathbf{m}'=e^{i\varpi}\mathbf{m}$, $\mathbf{\overline{m}}'=e^{-i\varpi}\mathbf{\overline{m}}$ with
\begin{equation}\label{regularizing}
	\varsigma = \frac{\abs{\Delta}}{r_+^3\sqrt{2\Sigma}}, \qquad e^{i\varpi}=\frac{\sqrt{\Sigma}}{\rho-ip},
\end{equation}
we obtain the frame
\begin{equation*}
	\begin{split}
		\mathbf{l}'&=\frac{1}{r_+^3\sqrt{2\Sigma}}\left((\rho^2+a^2)(\rho^2+b^2)\partial_t + \Delta \partial_\rho + a(\rho^2+b^2)\partial_\varphi + b(\rho^2+a^2)\partial_\psi\right)\\
		\mathbf{n}'&=\frac{r_+^3}{r^2\Delta\sqrt{2\Sigma}}\left((\rho^2+a^2)(\rho^2+b^2)\partial_t - \Delta \partial_\rho + a(\rho^2+b^2)\partial_\varphi + b(\rho^2+a^2)\partial_\psi\right)\\
		\mathbf{m}'&=\frac{1}{\sqrt{2\Sigma}}\left(\partial_\vartheta + i \frac{\sin\vartheta\cos\vartheta}{p}\left((a^2-b^2)\partial_t + \frac{a}{\sin^2\vartheta}\partial_\varphi - \frac{b}{\cos^2\vartheta}\partial_\psi\right)\right)\\
		\mathbf{\overline{m}}'&=\frac{1}{\sqrt{2\Sigma}}\left(\partial_\vartheta - i \frac{\sin\vartheta\cos\vartheta}{p}\left((a^2-b^2)\partial_t + \frac{a}{\sin^2\vartheta}\partial_\varphi - \frac{b}{\cos^2\vartheta}\partial_\psi\right)\right)\\
		\mathbf{k}'&=\mathbf{k}.
	\end{split}
\end{equation*}
Transforming to Eddington--Finkelstein-type coordinates, the above frame takes the form
\begin{equation*}
	\begin{split}
		\mathbf{l}'&=\frac{1}{r_+^3\sqrt{2\Sigma}}\left((\Delta+4Mr^2)\partial_\tau + \Delta\partial_r + 2a(r^2+b^2)\partial_\phi + 2b(r^2+a^2)\partial_\xi\right)\\
		\mathbf{n}'&=\frac{r_+^3}{r^2\sqrt{2\Sigma}}\left(\partial_\tau - \partial_r\right)\\
		\mathbf{m}'&=\frac{1}{\sqrt{2\Sigma}}\left(\partial_\theta + i \frac{\sin\theta\cos\theta}{p}\left((a^2-b^2)\partial_\tau + \frac{a}{\sin^2\theta}\partial_\phi - \frac{b}{\cos^2\theta}\partial_\xi\right)\right)\\
		\mathbf{\overline{m}}'&=\frac{1}{\sqrt{2\Sigma}}\left(\partial_\theta - i \frac{\sin\theta\cos\theta}{p}\left((a^2-b^2)\partial_\tau + \frac{a}{\sin^2\theta}\partial_\phi - \frac{b}{\cos^2\theta}\partial_\xi\right)\right)\\
		\mathbf{k}'&=\frac{1}{rp}(ab\partial_\tau + b\partial_\phi + a\partial_\xi),
	\end{split}
\end{equation*}
where $M=\mu/2$ is the mass of the black hole. We construct a frame $\partial_A$ satisfying $(\partial_A, \partial_B)=\eta_{AB}$ (i.e. an orthonormal frame) by
\begin{equation*}
	\partial_0 = \frac{\mathbf{l}'+\mathbf{n}'}{\sqrt{2}}, \quad \partial_1 = \frac{\mathbf{l}'-\mathbf{n}'}{\sqrt{2}}, \quad \partial_2=\frac{\mathbf{m}'+\mathbf{\overline{m}'}}{\sqrt{2}}, \quad \partial_3=\frac{\mathbf{m}'-\mathbf{\overline{m}'}}{i\sqrt{2}}, \quad \partial_5=\mathbf{k}'.
\end{equation*}
Explicitly, it is written
\begin{equation}\label{framevectors}
	\begin{split}
		\partial_0&=\frac{1}{2r_+^3\sqrt{\Sigma}}\left(\left(\Delta + 4Mr^2+\frac{r_+^6}{r^2}\right)\partial_\tau + \left(\Delta - \frac{r_+^6}{r^2}\right)\partial_r + 2a(r^2+b^2)\partial_\phi + 2b(r^2+a^2)\partial_\xi\right)\\
		\partial_1&=\frac{1}{2r_+^3\sqrt{\Sigma}}\left(\left(\Delta + 4Mr^2-\frac{r_+^6}{r^2}\right)\partial_\tau + \left(\Delta + \frac{r_+^6}{r^2}\right)\partial_r + 2a(r^2+b^2)\partial_\phi + 2b(r^2+a^2)\partial_\xi\right)\\
		\partial_2&=\frac{1}{\sqrt{\Sigma}}\partial_\theta\\
		\partial_3&=\frac{\sin\theta\cos\theta}{p\sqrt{\Sigma}}\left((a^2-b^2)\partial_\tau + \frac{a}{\sin^2\theta}\partial_\phi - \frac{b}{\cos^2\theta}\partial_\xi\right)\\
		\partial_5&=\frac{1}{rp}(ab\partial_\tau + b\partial_\phi + a\partial_\xi).
	\end{split}
\end{equation}
The corresponding dual orthonormal 1-forms are
\begin{equation}\label{frame1forms}
	\begin{split}
		e^0=&\frac{1}{2r^2r_+^3\sqrt{\Sigma}}\big( (r^2\Delta + r_+^6)d\tau + (-\Delta r^2 - 4Mr^4 + 2r^4(a^2\sin^2\theta + b^2\cos^2\theta) + 2a^2b^2r^2+r_+^6)dr\\
		&- a\sin^2\theta(r^2\Delta + r_+^6)d\phi - b\cos^2\theta (r^2\Delta + r_+^6)d\xi\big)\\
		e^1=&\frac{1}{2r^2r_+^3\sqrt{\Sigma}}\big( (-r^2\Delta + r_+^6)d\tau + (\Delta r^2 - 4Mr^4 - 2r^4(a^2\sin^2\theta + b^2\cos^2\theta) - 2a^2b^2r^2+r_+^6)dr\\
		&- a\sin^2\theta(-r^2\Delta + r_+^6)d\phi - b\cos^2\theta (-r^2\Delta + r_+^6)d\xi\big)\\
		e^2=&\Sigma d\theta\\
		e^3=&\frac{\sin\theta\cos\theta}{p\sqrt{\Sigma}}\left((b^2-a^2)d\tau + (b^2-a^2)dr + a(r^2+a^2)d\phi - b(r^2+b^2)d\xi\right)\\
		e^5=&\frac{1}{rp}\left(-ab d\tau - ab dr + (r^2+a^2)b\sin^2\theta d\phi + (r^2+b^2)a\cos^2\theta d\xi\right).
	\end{split}
\end{equation}
Using (\ref{cartanfirst}), the connection coefficients ${\omega^A}_B$ for the frame (\ref{frame1forms}) are computed to be
\begin{equation*}
	\begin{split}
		{\omega^0}_1&=\partial_r \left( \frac{r^2\Delta + r_+^6}{2r^2r_+^3\sqrt{\Sigma}}\right)e^0 - \partial_r \left( \frac{-r^2\Delta + r_+^6}{2r^2r_+^3\sqrt{\Sigma}}\right)e^1 - \frac{(a^2-b^2)r\sin\theta\cos\theta}{p\Sigma^{3/2}}e^3 - \frac{ab}{r^2p}e^5\\
		{\omega^0}_2&= -\frac{(a^2-b^2)\sin\theta\cos\theta}{\Sigma^{3/2}}e^0+\left(\frac{r^2\Delta - r_+^6}{2rr_+^3\Sigma^{3/2}}\right)e^2- \frac{p(r^2\Delta+r_+^6)}{2r^2r_+^3\Sigma^{3/2}}e^3\\
		{\omega^0}_3&=- \frac{(a^2-b^2)r\sin\theta\cos\theta}{p\Sigma^{3/2}}e^1 + \frac{p(r^2\Delta+r_+^6)}{2r^2r_+^3\Sigma^{3/2}}e^2 + \left(\frac{r^2\Delta - r_+^6}{2rr_+^3\Sigma^{3/2}}\right)e^3\\
		{\omega^0}_5 &= -\frac{ab}{r^2p}e^1 + \left(\frac{r^2\Delta - r_+^6}{2r^3r_+^3 \sqrt{\Sigma}}\right)e^5\\
		{\omega^1}_2 &= -\frac{(a^2-b^2)\sin\theta\cos\theta}{\Sigma^{3/2}}e^1 - \left(\frac{r^2\Delta + r_+^6}{2rr_+^3\Sigma^{3/2}}\right)e^2 - \frac{p(-r^2\Delta+r_+^6)}{2r^2r_+^3\Sigma^{3/2}}e^3\\
		{\omega^1}_3 &= - \frac{(a^2-b^2)r\sin\theta\cos\theta}{p\Sigma^{3/2}}e^0 + \frac{p(-r^2\Delta+r_+^6)}{2r^2r_+^3\Sigma^{3/2}}e^2 - \left(\frac{r^2\Delta + r_+^6}{2rr_+^3\Sigma^{3/2}}\right)e^3\\
		{\omega^1}_5 &= -\frac{ab}{r^2p}e^0 -\left(\frac{r^2\Delta + r_+^6}{2r^3r_+^3 \sqrt{\Sigma}}\right)e^5\\
		{\omega^2}_3 &= -\frac{p(r^2\Delta+r_+^6)}{2r^2r_+^3\Sigma^{3/2}}e^0 + \frac{p(-r^2\Delta+r_+^6)}{2r^2r_+^3\Sigma^{3/2}}e^1 - \frac{p}{\sin\theta\cos\theta}\partial_\theta \left(\frac{\sin\theta\cos\theta}{p\sqrt{\Sigma}}\right)e^3 - \frac{ab}{rp^2}e^5\\
		{\omega^2}_5 &= -\frac{ab}{rp^2}e^3 + \frac{(a^2-b^2)\sin\theta\cos\theta}{p^2\sqrt{\Sigma}}e^5\\
		{\omega^3}_5 &= \frac{ab}{rp}e^2.
	\end{split}
\end{equation*}
Recalling that in our chosen metric signature $\omega_{0A}=-{\omega^0}_A$ and $\omega_{jk}={\omega^j}_k$ for spatial indices $j,k\neq 0$, the coefficients of the spinor connection $\Gamma_Ae^A=\frac{1}{2}\sum_{A<B}\gamma^A\gamma^B\omega_{AB}$ are thus

\begin{equation*}
	\begin{split}
		\Gamma_0 =&\, \frac{1}{2} \Biggl( -\partial_r \left( \frac{r^2\Delta + r_+^6}{2r^2r_+^3\sqrt{\Sigma}}\right)\gamma^0\gamma^1 + \frac{(a^2-b^2)\sin\theta\cos\theta}{\Sigma^{3/2}}\gamma^0\gamma^2 - \frac{(a^2-b^2)r\sin\theta\cos\theta}{p\Sigma^{3/2}}\gamma^1\gamma^3 \\
		&- \frac{ab}{r^2p}\gamma^1\gamma^5 - \frac{p(r^2\Delta+r_+^6)}{2r^2r_+^3\Sigma^{3/2}}\gamma^2\gamma^3\Biggr)\\
		\Gamma_1 =&\, \frac{1}{2} \Biggl( -\partial_r \left( \frac{-r^2\Delta + r_+^6}{2r^2r_+^3\sqrt{\Sigma}}\right)\gamma^0\gamma^1 - \frac{(a^2-b^2)\sin\theta\cos\theta}{\Sigma^{3/2}}\gamma^1\gamma^2 + \frac{(a^2-b^2)r\sin\theta\cos\theta}{p\Sigma^{3/2}}\gamma^0\gamma^3 \\
		&+ \frac{ab}{r^2p}\gamma^0\gamma^5 + \frac{p(-r^2\Delta+r_+^6)}{2r^2r_+^3\Sigma^{3/2}}\gamma^2\gamma^3\Biggr)\\
		\Gamma_2=&\,\frac{1}{2}\Biggl(-\left(\frac{r^2\Delta - r_+^6}{2rr_+^3\Sigma^{3/2}}\right)\gamma^0\gamma^2 - \frac{p(r^2\Delta+r_+^6)}{2r^2r_+^3\Sigma^{3/2}}\gamma^0\gamma^3 - \left(\frac{r^2\Delta + r_+^6}{2rr_+^3\Sigma^{3/2}}\right)\gamma^1\gamma^2 \\
		&+ \frac{p(-r^2\Delta+r_+^6)}{2r^2r_+^3\Sigma^{3/2}}\gamma^1\gamma^3 + \frac{ab}{rp^2}\gamma^3\gamma^5\Biggr)\\
		\Gamma_3=&\,\frac{1}{2}\Biggl( \frac{(a^2-b^2)r\sin\theta\cos\theta}{p\Sigma^{3/2}}\gamma^0\gamma^1 + \frac{p(r^2\Delta+r_+^6)}{2r^2r_+^3\Sigma^{3/2}}\gamma^0\gamma^2 - \left(\frac{r^2\Delta - r_+^6}{2rr_+^3\Sigma^{3/2}}\right)\gamma^0\gamma^3\\
		&-\frac{p(-r^2\Delta+r_+^6)}{2r^2r_+^3\Sigma^{3/2}}\gamma^1\gamma^2 -\left(\frac{r^2\Delta + r_+^6}{2rr_+^3\Sigma^{3/2}}\right)\gamma^1\gamma^3 - \frac{p}{\sin\theta\cos\theta}\partial_\theta \left(\frac{\sin\theta\cos\theta}{p\sqrt{\Sigma}}\right)\gamma^2\gamma^3 - \frac{ab}{rp^2}\gamma^2\gamma^5\Biggr)\\
		\Gamma_5 =&\, \frac{1}{2}\Biggl(\frac{ab}{r^2p}\gamma^0\gamma^1 - \left(\frac{r^2\Delta - r_+^6}{2r^3r_+^3 \sqrt{\Sigma}}\right)\gamma^0\gamma^5 - \left(\frac{r^2\Delta + r_+^6}{2r^3r_+^3 \sqrt{\Sigma}}\right)\gamma^1\gamma^5\\
		&- \frac{ab}{rp^2}\gamma^2\gamma^3 + \frac{(a^2-b^2)\sin\theta\cos\theta}{p^2\sqrt{\Sigma}}\gamma^2\gamma^5\Biggr).
	\end{split}
\end{equation*}

\pagebreak

\bibliography{refs}
\bibliographystyle{amsplain}
\end{document}